\newcommand{\cA}{{\mathcal A}}  
\newcommand{\cB}{{\mathcal B}}
\newcommand{\cO}{{\mathcal O}}
\newcommand{\ccF}{{\mathscr F}}\newcommand{\cF}{{\mathcal F}}
\newcommand{\ccG}{{\mathscr G}}\newcommand{\cG}{{\mathcal G}}
\newcommand{\ccH}{{\mathscr H}}\newcommand{\cH}{{\mathcal H}}
\newcommand{\cN}{{\mathcal N}}
\newcommand{\cU}{{\mathcal U}}
\newcommand{\cX}{{\mathcal X}}
\DeclareMathOperator{\Var}{Var}
\renewenvironment{description}
  {\list{}{\leftmargin 18pt \itemindent 14pt }}
  {\endlist}
\newenvironment{enumeratei}
  {\begin{enumerate} }
  {\end{enumerate}}
\newcommand{\half}{\frac{1}{2}}
\newcommand{\Ind}{{\mathds 1}}
\newcommand{\ind}[1]{\Ind_{\{#1\}}}
\newcommand{\R}{\mathbb{R}}
\newcommand{\bbF}{\mathbb{F}}
\newcommand{\bbH}{\mathbb{H}}
\newcommand{\bbG}{\mathbb{G}}
\newcommand{\N}{\mathbb{N}}
\newcommand{\E}{\mathbb{E}}
\newcommand{\btheta}{{\boldsymbol \theta}}
\newtheorem{theorem}{Theorem}[section]
\newtheorem{corollary}[theorem]{Corollary}      
\newtheorem{lemma}[theorem]{Lemma}              
\newtheorem{proposition}[theorem]{Proposition}  
\theoremstyle{definition}
\newtheorem{example}{Example}[section]
\newtheorem{definition}{Definition}[section]
\newtheorem{remark}{Remark}[section]
\newtheorem{assumption}{Assumption}[section]
\renewcommand{\cF}{\ccF}
\renewcommand{\cG}{\ccG}
\renewcommand{\cH}{\ccH} 
\renewcommand{\cF}{\ccF}
\renewcommand{\cG}{\ccG}
\renewcommand{\cH}{\ccH}
\begin{document}

\title[Defaultable Term Structure Modelling]{
Dynamic Defaultable Term Structure Modelling beyond the Intensity Paradigm
}
    \author {Frank Gehmlich}
    \author {Thorsten Schmidt}
    \address{University of Freiburg, Mathematical Insitute, Eckerstr. 2, 79106 Freiburg, Germany. 
             Email: thorsten.schmidt@stochastik.uni-freiburg.de. }
    \thanks{We thank Monique Jeanblanc, Shiqi Song, Ania Aksamit, Claudio Fontana, R\"udiger Frey and two anonymous referees for very valuable discussions. Financial support from the Chair Risque de Cr\'edit is gratefully acknowledged}
    \date{\today}

\begin{abstract}
The two main approaches in credit risk are the structural approach pioneered in Merton (1974) and the reduced-form framework proposed in Jarrow \& Turnbull (1995) and in Artzner \& Delbaen (1995). 
The goal of this article is to provide a unified view on both approaches. This is achieved by studying reduced-form approaches under weak assumptions. In particular we do not assume the global existence of a default intensity and allow default at fixed or predictable times with positive probability, such as coupon payment dates. 

In this generalized framework we study dynamic term structures prone to default risk following the forward-rate approach proposed in Heath-Jarrow-Morton (1992). It turns out, that previously considered models lead to arbitrage possibilities when default may happen at a predictable time with positive probability. A suitable generalization of the forward-rate approach contains an additional stochastic integral with atoms at predictable  times and  necessary and sufficient conditions for an appropriate no-arbitrage condition (NAFL) are given.
In the view of efficient implementations we develop a new class of affine models which do not satisfy the standard assumption of stochastic continuity.  

The chosen approach  is intimately related to the theory of enlargement of filtrations, to which we provide a small example by means of filtering theory  where the Az\'ema supermartingale contains upward and downward jumps, both at predictable and totally inaccessible stopping times.
\end{abstract}

\maketitle

\vspace{2mm}

\keywords{\noindent Keywords: credit risk, HJM, forward-rate, structural approach, reduced-form approach, Az\'ema supermartingale, affine processes, filtering.}

\section{Introduction}

The two most common approaches to credit risk modelling are the \emph{structural} approach, pioneered in the seminal work of Merton \cite{Merton1974}, and the \emph{reduced-form} approach which can be traced back to early works of Jarrow, Lando, and Turnbull \cite{JarrowTurnbull1995,Lando94} and to \cite{ArtznerDelbaen95}. 
In structural approaches, default happens when the company is not able to meet its obligations or a certain lower bound is hit by the value of the firm's assets. In many cases this happens when a promised payment cannot be made, which in the approaches in \cite{Merton1974} and its extensions \cite{Geske1977,GeskeJohnson84} leads to  default at  pre-specified times, such as coupon dates. The recently missed coupon payment by Argentina is an example for such a credit event as well as the default of Greece on the 1st of July\footnote{Argentina's missed coupon payment on \$29 billion debt was voted a credit event by the International Swaps and Derivatives Association, see the announcements in \cite{ISDAArgentina2014} and \cite{ReutersArgentina2014}. Regarding the failure of 1.5 Billon EUR of Greece on a scheduled debt repayment to the International Monetary fund, see e.g. \cite{NYTimes2015}.}.
The possibility that default happens at a predictable time is in strong contrast to most reduced-form approaches:  the class of so-called \emph{intensity-based} models is characterized by the property, that the default time admits an intensity and hence it avoids predictable  and in particular deterministic times. 
For a review and guide to the rich literature on this subject we refer to \cite{BieleckiRutkowski2002,LandoBook,MFE}.

Despite their theoretical appeal, structural models pose a number of challenges when used for pricing credit derivatives: first, the underlying value of the firm is difficult to observe. This fact motivated  approaches with incomplete information, see \cite{DuffieLando2001,collin-goldstein-hugonnier-04,FreySchmidt2009}. Second, the full priority hierarchy of the firm's capital structure is needed which additionally complicates the matter. Reduced-form approaches circumvent these difficulties and are less ambitious about the precise mechanism leading to default, see for example \cite{ArtznerDelbaen95,DuffieSchroderSkiadas96,JarrowLandoTurnbull97}. Closely related are dynamic term structure approaches following the pioneering work of Heath-Jarrow-Morton (HJM), \cite{HJM}, and its extension  to default risk, cf. \cite{Schoenbucher:CRDerivatives,DuffieSingleton99}. One offspring of this approach are highly tractable affine and quadratic factor models, see \cite{ChenFilipovicPoor,Duffie05,ErraisGieseckeGoldberg2010}.

It is a remarkable observation of \cite{BelangerShreveWong2004} that it is possible to extend the reduced-form approach beyond the class of intensity-based models. The authors study a class of first-passage time models under a filtration generated by a Brownian motion and show its use for pricing and modelling credit risky bonds. Our goal is to start with even weaker assumptions on the default time and to allow for jumps in the compensator of the default time at predictable times. From this general viewpoint it turns out, surprisingly, that previously used HJM approaches lead to arbitrage: the whole term structure is absolutely continuous and can not compensate for points in time bearing a positive default probability. We propose a suitable extension with an additional  term allowing for discontinuities in the term structure at  certain random times and derive precise drift conditions for an appropriate  no-arbitrage condition,  no asymptotic free lunch (NAFL).

There are approaches that bridge the gap between structural and reduced-form models, for example by considering an incomplete or noisy observation of the firm value. The first work in this direction, \cite{DuffieLando2001}, studied a first-passage time approach proposed in \cite{BlackCox1976} and showed that under incomplete information one arrives at an intensity-based model. This property can, however, not be extended to structural models where default happens at a fixed time with positive probability, like in the Merton model or its extensions. This underlines the importance of a general approach in this regard and we provide some illustrating examples. Even more, this observation motivates  a specific class of models, which we call extended Merton models, where default may happen at deterministic times with positive probability. In this class we can develop highly tractable factor models and we propose a new class of affine models, which are not stochastically continuous, matching this natural property of a generalized Merton model. 
Certainly, our approach is intimately related to the theory of enlargement of filtrations.  We provide a simple example inspired by filtering theory where the Az\'ema supermartingale contains upward and downward jumps, both at predictable and totally inaccessible stopping times.

The structure of the article is as follows: in Section \ref{sec:tau}, we introduce the general setting and study drift conditions in an extended HJM-framework which guarantee absence of arbitrage in the bond market. In Section \ref{sec:merton} we additionally assume that default may happen with positive probability only at a finite number of deterministic times and study the absence of arbitrage in this class, which we call generalized Merton models. Section \ref{sec:filtering} studies structural models under  incomplete information in  detail. Besides this, we construct  a default time where the Az\'ema supermartingale naturally has upward and downward jumps by means of filtering methods. In Section \ref{sec:doublystochastic} we give a precise construction of arbitrage-free HJM-models in a doubly-stochastic setting while Section \ref{secAffine} studies a new class of affine models which are stochastically discontinuous. Section \ref{sec:conclusion} concludes.

\section{A general account on credit risky bond markets}\label{sec:tau}
 
Consider a filtered probability space $(\Omega, \cA, \bbF, P)$ with a filtration $\bbF=(\cF_t)_{t \ge 0}$ satisfying the usual conditions, i.e.\ it is right-continuous and  $\cF_0$ contains the $P$-nullsets $N_0$ of $\cA$. Throughout, the probability measure $P$ denotes the objective measure. As we use tools from stochastic analysis, all appearing filtrations shall satisfy the usual conditions. We follow the notation from \cite{JacodShiryaev} and refer to this work for details on stochastic processes which are not laid out here.

The filtration $\bbF$ contains all available information in the market. The default of a company is public information and we therefore assume that the default time $\tau$ is an $\bbF$-stopping time.
We denote the \emph{default indicator process} $H$ by
	$$ H_t = \ind{t \ge \tau}, \qquad t \ge 0, $$
such that $H_t=\Ind_{\ldbrack\tau,\infty\ldbrack}(t)$ is a right-continuous, increasing process. We will also make use of the \emph{survival process} $1-H=\Ind_{\ldbrack 0, \tau\ldbrack}$.

A credit risky bond with maturity $T$ is a contingent claim promising to pay one unit of currency at  $T$. We denote the price of the bond with maturity $T$ at time $t \le T$ by $P(t,T)$. If no default occurred prior to or at $T$ we have that $P(T,T)=1$. 
We will consider zero recovery, i.e.\ the bond loses its total value at default, such that $P(t,T)=0$ on $\{t \ge \tau\}$. Extensions to different types of recovery can be treated along the lines of \cite{BelangerShreveWong2004}. The family of stochastic processes $\{(P(t,T)_{0 \le t \le T})$, $T\ge 0\}$ describes  the  evolution of the \emph{term structure} $T \mapsto P(.,T)$ over time.

Besides the bonds there is a \emph{num\'eraire} $X^0$, which is a strictly positive, adapted process. We assume without loss of generality that $X^0_0=1$. Moreover, we make the  weak assumption that $X^0$ is absolutely continuous. Then a \emph{short-rate}  exists, which is a progressively measurable process $r$ such that $X^0_t=\exp(\int_0^t r_s ds)$. For practical applications one would use the overnight index swap (OIS) rate for constructing such a num\'eraire.

\subsection{Absence of arbitrage in credit risky bond markets}
The market of defaultable bonds contains  an infinite number of assets and is treated here in the spirit of large financial markets following \cite{KleinSchmidtTeichmann2015}. It is the first time that this concept is applied to a market with credit risk. We therefore give a short introduction to the topic.

 Fix a finite time horizon $T^*>0$ and consider the filtration  $\bbF=(\cF_t)_{0 \le t \le T^*}$. 
We will need the following assumption on right-continuity of the bond prices in $T$ and on uniform local boundedness of bond prices. Recall that by $N_0$ we denoted the $P$-nullsets of $\cA$. For a generic process $X$ and a random time $\sigma$ we denote by $X^{(\sigma)}=(X_{t \wedge \sigma})_{t \ge 0}$  the process stopped at $\sigma$. By $a \wedge b:=\min(a,b)$ we denote the minimum of $a$ and $b$.
\begin{assumption}\label{ass1}
It holds that the set 
\begin{align*}
 \bigcup_{t \in [0,T^*]} \big\{ \omega: T \to P(t,T)(\omega) \text{ is not right-continuous}\big\} 
\end{align*}
is contained in  a $P$-nullset.
Moreover, for any $T \in [0,T^*)$ there are $\epsilon>0$, an increasing sequence of stopping times
$\sigma_n\to\infty$ and $\kappa_n\in [0,\infty)$ such that
$$P(t,U)^{(\sigma_n)}\leq \kappa_n,$$ for all $U\in[T,T+\epsilon)$ and all
$t\leq T$.
\end{assumption}

In classical HJM-models absolute continuity with respect to the maturity always holds. To our knowledge, the models studied later in this paper are the first dynamic term structure models which explicitly incorporate discontinuities in the term structure $T \mapsto P(.,T)$. In a general setting of default times as  considered here we will show that their presence is also necessary to guarantee absence of arbitrage.

\begin{definition}\label{LFM}
Fix a sequence $(T_i)_{i\in\N}$ in $[0,T^*]$.
Define the $n+1$-dimensional stochastic process  $(\mathbf S^n)=(S^{0},S^{1},\dots,S^{n})$ as follows:
\begin{align}S^{i}_t= (X_t^0)^{-1} P(t \wedge T_i ,T_i),\qquad 0 \le t \le T^*,
 \label{defSi}
\end{align}
for $i=1,\dots,n$ and $S^0_t\equiv 1$. The large financial market consists of the  sequence  $(\mathbf S^n, n \ge 1)$ of classical markets.
\end{definition}

We denote by $L^0_+$ the space of equivalence classes of measurable functions which are non-negative almost surely. By $L^1$ we denote the space of all integrable and measurable functions. Its dual space is the space of bounded measurable functions, denoted by $L^\infty$ and the bounded, non-negative measurable functions are denoted by $L_+^\infty$.  The weak-$^*$ topology on $L^\infty$ is the topology $\sigma(L^\infty,L^1)$, generated by the norm $\parallel X\parallel_*:=\sup\{E[X\varphi]:\varphi \in L^1, E[\varphi] \le 1 \}$. For further details, see for example Section A.7 in \cite{FoellmerSchied2004}. 

Absence of arbitrage is considered for each finite market $\mathbf{S}^n$ and appropriate limits.
Let $\btheta$ be a predictable $\mathbf{S}^n$-integrable process and denote by $(\btheta\cdot \mathbf{S}^n)_t$ the stochastic integral
of $\btheta$ with respect to $\mathbf{S}^n$ until $t$. The process $\btheta$ is called \emph{admissible
trading strategy} if $\btheta_0=0$ and there is an $a>0$ such that $(\btheta\cdot \mathbf{S}^n)_t\ge -a$ $P$-almost surely for all  $t \in [0,T^*]$.
Define the following cones:
\begin{equation}
\mathbf{K}^n =\{(\btheta\cdot \mathbf{S}^n)_{T^*}:\text{$\btheta$ admissible}\}\text{ and }
\mathbf{C}^n =(\mathbf{K}^n-L^0_+)\cap L^{\infty}.\label{K}
\end{equation}
$\mathbf{K}^n$ contains all replicable claims in the finite market $n$, and $\mathbf{C}^n$ contains all claims in $L^{\infty}$
which  can be superreplicated. We define the set $\mathbf{M}_e^n$ of equivalent local martingale measures for the finite market $n$ as
\begin{align}\label{Me}
 \mathbf{M}_e^n &=
 \{Q\sim P|_{\cF_{T^*}}: \mathbf S^n \text{ is local $Q$-martingale}\}
 \end{align}
Besides this, we assume that for each finite market $n$ no arbitrage holds, i.e.
\begin{equation}\label{emm}
\mathbf{M}^n_e\ne\emptyset,\quad\quad\text{ for all $n\in\N$}.
\end{equation}
However, there is still the possibility of approximating an arbitrage profit by trading on the sequence of market models which motivates the notion of \emph{no asymptotic free lunch}.

\begin{definition}\label{N(A)FL} A given large financial market satisfies NAFL if
$$
\overline{\bigcup_{n=1}^{\infty}\mathbf{C}^n}^*\cap L^{\infty}_+ =\{0\}
$$ where $\overline{\mathbf{C}}^*$ denotes the closure of $\mathbf{C} \subset L^\infty$ with respect to the weak-$^*$ topology.
\end{definition}

\begin{definition}\label{NAFL}
The term structure model 	$ \{(P(t,T))_{0 \le t \le T}: 0 \le T \le T^*\}$ satisfies NAFL if there exists a dense sequence
$(T_i)_{i\in\N}$ in $[0,T^*]$, such that  the large financial market of Definition~\ref{LFM} satisfies the condition NAFL.
\end{definition}

Theorem 5.2 in  \cite{KleinSchmidtTeichmann2015} shows that NAFL is equivalent to existence of an equivalent local martingale measure (ELMM) which we recall here for convenience.
\begin{theorem}\label{th1}
Assume that Assumptions \ref{ass1} and \eqref{emm} hold.
The family of term structure models $ \{(P(t,T))_{0 \le t \le T}: 0 \le T \le T^*\}$  satisfies NAFL, if and only if there exists a measure $Q^*\sim P|_{\cF_{T^*}}$ such that
\begin{align}\label{EMM}
( (X_t^0)^{-1} P(t,T))_{0 \leq t \leq T} \ \text{are local $Q^*$-martingales for all } T \in [0,T^*].
\end{align}
\end{theorem}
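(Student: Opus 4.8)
The plan is to deduce the statement from Theorem 5.2 of \cite{KleinSchmidtTeichmann2015}, which characterises NAFL of a large financial market $(\mathbf S^n)_{n\ge 1}$ by the existence of a \emph{single} measure $Q^*\sim P|_{\cF_{T^*}}$ that is a local martingale measure for every finite market $\mathbf S^n$ at once, i.e.\ $Q^*\in\bigcap_n \mathbf M^n_e$. Writing $M^U_t:=(X^0_t)^{-1}P(t,U)$ for the discounted bond with maturity $U$, each $S^i$ coincides with $M^{T_i}$ on $[0,T_i]$ (up to its deterministic maturity), so possessing a common ELMM for $(\mathbf S^n)$ is tantamount to every $M^{T_i}$ being a local $Q^*$-martingale. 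The only genuine work beyond citing Theorem 5.2 is therefore to bridge between this countable family $\{M^{T_i}\}_{i\in\N}$ and the full continuum of maturities appearing in \eqref{EMM}, and Assumption \ref{ass1} is precisely what makes this bridge available.

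For the implication \eqref{EMM} $\Rightarrow$ NAFL, I would take $Q^*$ as in \eqref{EMM} and fix an arbitrary dense sequence $(T_i)_{i\in\N}$ in $[0,T^*]$. Since $M^{T_i}$ is a local $Q^*$-martingale and $S^i$ agrees with it up to the deterministic time $T_i$, each $S^i$ inherits the local-martingale property, so $Q^*\in\mathbf M^n_e$ for all $n$. Hence $Q^*$ is a common ELMM for the large financial market, and Theorem 5.2 of \cite{KleinSchmidtTeichmann2015} yields NAFL, verifying Definition \ref{NAFL} for this choice of $(T_i)$.

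For the converse, NAFL of the term structure furnishes, by Definition \ref{NAFL}, a dense sequence $(T_i)$ along which the large financial market satisfies NAFL; Theorem 5.2 then delivers $Q^*\sim P|_{\cF_{T^*}}$ under which each $M^{T_i}$ is a local $Q^*$-martingale on $[0,T_i]$. To obtain \eqref{EMM} for an \emph{arbitrary} maturity $T$, I would approximate $T$ strictly from above by a subsequence $T_{i_k}\downarrow T$ with $T_{i_k}\in(T,T+\epsilon)$, where $\epsilon$ is the constant attached to $T$ in Assumption \ref{ass1}. The right-continuity part of Assumption \ref{ass1} then gives, for $P$-almost every $\omega$ and every fixed $t\le T$, the convergence $P(t,T_{i_k})(\omega)\to P(t,T)(\omega)$, so that $M^{T_{i_k}}_t\to M^T_t$ pointwise.

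The decisive step is to carry this pointwise limit through the local-martingale property, and this is where uniform local boundedness enters. Using the stopping times $\sigma_n\to\infty$ and bounds $\kappa_n$ from Assumption \ref{ass1} --- refined by intersecting with the localising sequence of the continuous, strictly positive process $(X^0)^{-1}$ so as to dominate $(X^0_{t\wedge\sigma_n})^{-1}$ as well --- the stopped processes $(M^{T_{i_k}})^{(\sigma_n)}$ are bounded uniformly in $k$ on $[0,T]$. Being bounded local $Q^*$-martingales they are true $Q^*$-martingales, and bounded convergence for conditional expectations lets me interchange $\lim_{k}$ with the conditioning, so that $(M^T)^{(\sigma_n)}$ is a $Q^*$-martingale for every $n$; since $\sigma_n\to\infty$, this shows $M^T$ is a local $Q^*$-martingale, which is \eqref{EMM}. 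I expect this interchange to be the main obstacle: one must secure a single localising sequence that works uniformly in $k$ and simultaneously tames the num\'eraire, and it is exactly the right-continuity in $T$ that forces the approximating maturities to be taken from above rather than below.
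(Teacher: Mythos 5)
The paper itself contains no proof of Theorem \ref{th1}: it is quoted, as the text says, ``for convenience'' as a restatement of Theorem 5.2 of \cite{KleinSchmidtTeichmann2015}. So your proposal cannot be compared with an in-paper argument; what it does is reconstruct the proof that the citation delegates, namely: invoke the abstract fundamental theorem for large financial markets to obtain a single $Q^*$ that is an ELMM for every finite market along the dense sequence, then use Assumption \ref{ass1} to pass from the countable family $\{T_i\}$ to the continuum of maturities. Your bridge is correct for every $T\in[0,T^*)$: approximating $T$ strictly from above, stopping at $\sigma_n$ intersected with a localizer keeping $(X^0)^{-1}$ bounded, using that bounded local martingales are true martingales, and applying bounded convergence under the conditional expectation is exactly the intended use of the right-continuity and uniform local boundedness in Assumption \ref{ass1}. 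One point you gloss over: with the paper's literal formula $S^i_t=(X^0_t)^{-1}P(t\wedge T_i,T_i)$, on $(T_i,T^*]$ the process equals $(X^0_t)^{-1}\ind{\tau>T_i}$, which is continuous and of finite variation, hence a local martingale only if constant; so your claim that ``$S^i$ inherits the local-martingale property from $M^{T_i}$'' requires reading the definition as stopping the whole discounted price, i.e.\ $(X^0_{t\wedge T_i})^{-1}P(t\wedge T_i,T_i)$ --- the intended convention, without which $\mathbf M^n_e$ in \eqref{emm} would typically be empty whenever $r\neq 0$.

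The genuine gap is the endpoint $T=T^*$. In the direction NAFL $\Rightarrow$ \eqref{EMM}, Definition \ref{NAFL} furnishes only \emph{some} dense sequence $(T_i)$, which need not contain $T^*$; your approximation needs maturities $T_{i_k}>T$, which do not exist in $[0,T^*]$ when $T=T^*$; and the boundedness clause of Assumption \ref{ass1} is explicitly restricted to $T\in[0,T^*)$, so none of your tools reach $T^*$. This is not a removable technicality. Consider $\Omega=\{\omega_1,\omega_2\}$, $\cF_t$ trivial for $t<T^*$ and $\cF_{T^*}=2^\Omega$, $r\equiv 0$, $\tau(\omega_1)=T^*$, $\tau(\omega_2)=\infty$, with $P(t,T)\equiv 1$ for $T<T^*$ and $P(t,T^*)=\tfrac12+\tfrac{t}{4T^*}$ for $t<T^*$, $P(T^*,T^*)=\ind{\tau>T^*}$. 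Assumption \ref{ass1} and \eqref{emm} hold, and NAFL holds along any dense sequence avoiding $T^*$ (all traded assets are then constant, so $\mathbf K^n=\{0\}$), yet $(P(t,T^*))_{0\le t\le T^*}$ is bounded and non-constant deterministic before $T^*$, hence a local --- equivalently true --- martingale under no measure equivalent to $P$. So your proof, and indeed the statement as transcribed, needs a patch at the endpoint: either the conclusion must be restricted to $T\in[0,T^*)$, or the quantification in Definition \ref{NAFL} must be strengthened so that the dense sequence can be taken to contain $T^*$, after which your argument covers $T^*$ trivially because it lies in the sequence.
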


\subsection{An extension of the HJM-approach}\label{sec:HJM}

Now we are in the position to extend the  HJM approach in an appropriate way to obtain arbitrage-free defaultbale term structure models under weak assumptions. Consider a measure $Q^*\sim P$. Our intention is to find conditions which render $Q^*$ an equivalent local martingale measure. 
From now on, only occasionally the measure $P$ will be used, such that all appearing terms (like martingales, almost sure properties, etc.) are to be considered with respect to $Q^*$ if not stated otherwise.

\subsubsection*{Announced times}  
The default indicator process $H$ is a bounded, c\'adl\'ag and increasing process, hence a submartingale of class (D). By the Doob-Meyer decomposition, the process 
\begin{align}
\label{eq:M} M_t = H_t - H^p_t, \quad t \ge 0 
\end{align}
is a true martingale where $H^p$ denotes the dual $\bbF$-predictable projection, also called  compensator, of $H$. As $1$ is an absorbing state, $H^p_t=H^p_{t \wedge \tau}$. The construction of random times satisfying certain specifications can be approached from different viewpoints while, to the best of our knowledge, an existence result covering our general setting  is not yet available in the literature. However, in Section \ref{sec:doublystochastic} we provide an existence  under when immersion holds following the ideas from  \cite{Jacod1975}. For more general results and related literature,  see  \cite{LiRutkowski2012} and \cite{Song2014}.

To keep the arising technical difficulties at a minimum, we assume that $H^p$ can be decomposed in an absolutely continuous and a (predictable) pure-jump part, such that
\begin{align}\label{Hp}
H^p_t = \int_0^{t \wedge \tau} h_s ds + \int_0^{t \wedge \tau} \int_\R x \Gamma(ds,dx), \quad t \ge 0,
\end{align}
with a non-negative process $h$ and with a predictable integer-valued random measure satisfying $\Gamma(dt,dx)=\sum_{s >0}  \ind{\Delta H^p_s > 0}\delta_{(s,\Delta H^p_s)} (dt,dx)$; here $\delta_x$ denotes the Dirac measure at the point  $x$.  
The process $h$ can be chosen predictable, see Theorem 2.1  in \cite{DelbaenSchachermayer1995}. 

Whenever $\Delta H^p_\sigma > 0$, for a predictable time $\sigma$, there is a positive probability that the company defaults at time $\sigma$ and the probability of this event is related to $\Delta H^p_\sigma$. We call such times \emph{risky times}, i.e.\ predictable times having a positive probability of a default occuring right at that time. 

A random set $A$ is called \emph{thin}, if $A=\bigcup \llbracket A_n \rrbracket$ with some stopping times $A_n$ and the associated stochastic intervals $\llbracket A_n \rrbracket=\{(\omega,t):t \in \R_{\ge 0}, t=A_n(\omega)\}$. The set of risky times is a thin set  $\bigcup_{i \ge 1}\llbracket U_i \rrbracket $ with predictable times $U_1,U_2,\dots$. These times may be chosen in such a way that $\llbracket U_i \rrbracket \cap \llbracket U_j \rrbracket = \emptyset$ if $j\neq i$. For our purposes it will be convenient to work with the following  class of predictable times.

\begin{definition}
We call a random time $U$ \emph{announced} if there exists an $\bbF$-stopping time $S$ with $S<U$ almost surely and $U$ is $\cF_S$-measurable.  
\end{definition}
The intuition behind this definition is as follows: at the announcement time $S$ the market receives new information about a future date $U$ (i.e.\ $S<U$) at which default may happen with positive probability. For example, at time $S$ the market realizes that a country has difficulties to pay some of its obligations which are due at the coupon payment date $U$. See Example \ref{ex:announcingtimes} and  Section \ref{sec:filtering} for applications under incomplete information.  Note that any deterministic, positive time is announced and that an announced time is always predictable. The concept of an announced time has similarities with the concept of an \emph{announcing sequence} of the predictable time $U$ (see Theorem I.2.15 in \cite{JacodShiryaev}), which is a sequence of optional times $(S_n)$  being strictly smaller than $U$ and increasing to $U$.  

To ensure that the subsequent analysis is meaningful, we make the  following technical assumptions.
\begin{itemize}
\item[{\bf (A1)}] The process $h$ is non-negative, predictable and integrable on $[0,T^*]$:
	$$ \int_0^{T^*} |h_s| ds < \infty, \quad Q^*\text{-a.s.}, $$
\item[{\bf (A2)}]  the random measure $\Gamma(ds,dx)$ is given by
	$$ \Gamma([0,t],dx) = \sum_{i=1}^N \ind{U_i \le t} \delta_{\Gamma_i}(dx), $$
	where each risky time $U_i$ is announced, say by $S_i$, and
	$\Gamma_i:\Omega \to (0,1)$ is $\cF_{S_i}$-measurable, $1 \le i \le N$.
\end{itemize}
 Assumption {\bf (A2)} implies that the set $\cU$ of (default) risky times is finite. This is a reasonable assumption while working on a finite time interval. If $\Gamma_i=1$, default happens with probability one at time $U_i$, a case which we exclude for simplicity of exposition. 

\subsubsection*{Dynamic term structures with discontinuities}
Regarding defaultable bond prices we will start from a forward-rate framework and allow for discontinuities  in the term structure at  (default) risky times.  
Consider current time $t\in [0,T^*)$ and a bond with maturity $T\in (t, T^*]$. If the risky time  $U_i$ was announced before time $t$, investors will obtain an additional premium for the event $\{\tau = U_i\}$ only when $T \ge U_i $. For $T<U_i$ the investors are not exposed to this risk and hence will not receive an additional premium. This naturally leads to a discontinuity in the term structure $T \mapsto P(.,T)$ at $U_i$. 
Motivated by this, we  consider  a family of random measures $(\mu_t)_{t \ge 0}$, defined by
	$$ \mu_t(du) :=  \sum_{S_i \le t} \delta_{U_i}(du) $$
and assume that defaultable bond prices are given by
	\begin{align}\label{PtT}
	P(t,T) = \ind{\tau>t} \exp\bigg( -\int_t^T f(t,u) du - \int_t^T g(t,u) \mu_t(du)\bigg), \quad 0 \le t \le T \le T^*.
	\end{align}
Note that the right-continuity in Assumption \ref{ass1} naturally holds under \eqref{PtT}.  
The processes $f$ and $g$ are assumed to be It\^o processes of the form
\begin{align}\label{f1}
f(t,T) &= f(0,T) + \int_0^t a(s,T)ds + \int_0^t b(s,T) \cdot dW_s,\\
g(t,T) &= g(0,T) + \int_0^t \alpha(s,T)ds + \int_0^t \beta(s,T) \cdot dW_s,  \label{g1}
\end{align}
with an $n$-dimensional $Q^*$-Brownian motion $W$. By $\cB$ we denote the Borel $\sigma$-algebra generated by the open sets in $\R_{\ge 0}$ and by $\cO$ we denote the optional $\sigma$-algebra generated by all $\bbF$-adapted c\`adl\`ag processes. We will need the following technical assumptions.
\begin{itemize}
\item[{\bf (B1)}] the initial forward curves $f(\omega,0,t)$ and $g(\omega,0,t)$ are $\cF_0 \otimes \cB$-measurable, and integrable on $[0,T^*]$:
    $$\int_0^{T^*}|f(0,u)|+ |g(0,u)|du<\infty, \qquad Q^*\text{-a.s.},$$
\item[{\bf (B2)}] the \emph{drift parameters} $a(\omega,s,t)$ and $\alpha(\omega,s,t)$ are $\mathbb{R}$-valued, and $\mathcal{O}\otimes\mathcal{B}$-measurable. The parameter $a$ is integrable on $[0,T^*]$:
    $$\int_0^{T^*}\int_0^{T^*} |a(s,t)| ds\,dt<\infty, \quad Q^*\text{-a.s.},$$
    while $\alpha$ is bounded on $[0,T^*]$:
    $$ \sup_{s,t\leq T^*} |\alpha(s,t)| <\infty, \quad Q^*\text{-a.s.},$$
\item[{\bf (B3)}] the \emph{volatility parameter} $b(\omega,s,t)$ is $\mathbb{R}^n$-valued, $\mathcal{O}\otimes\mathcal{B}$-measurable, and bounded on $[0,T^*]$:
    $$ \sup_{s,t\leq T^*} \parallel b(s,t) \parallel <\infty, \quad Q^*\text{-a.s.,}$$
while $\beta(\omega,s,t)$ is $\R^n$-valued, $\cO\otimes\cB$-measurable, and  square integrable on $[0,T^*]$:
    $$\int_0^{T^*}\int_0^{T^*} \parallel\beta(s,t)\parallel^2  ds\,dt<\infty, \quad Q^*\text{-a.s.,}$$
\item[{\bf (B4)}] we assume that the dual predictable projection $\nu$ of the integer-valued random measure $\mu(dt,du)=\sum_{i = 1}^n \delta_{(S_i,U_i)}(dt,du)$ satisfies  $\nu(dt,du) = \nu(t,du)dt$ with a kernel $\nu(\omega,t,du)$, and
$$ \int_0^{T^*} \int_0^{T^*}|e^{-g(t,u)}-1| \, \nu(t,du)dt < \infty, \quad Q^*\text{-a.s.} $$
	Moreover $Q^*(\tau=S_i)=0$ for all $i \ge 1$.
\end{itemize}
\begin{remark}
Assumption (B4) requires that announcing times are totally inaccessible, i.e.\ come as a surprise. Moreover, there is no default by news, i.e.\ $\tau$ does not coincide with an announcing time. Both assumptions have been made to simplify the exposition but could be relaxed without big difficulties at the cost of lengthier formulas. For details on the compensator or the dual predictable projection of a random measure we refer to section II.1.1a in \cite{JacodShiryaev}.
\end{remark}

The following result gives the desired drift condition rendering the considered measure $Q^*$ an equivalent local martingale measure.  In this case, the family of term structure models $\{P(t,T)_{0 \le t \le T}: 0 \le T \le T^*\}$ satisfies NAFL by Theorem \ref{th1}.  
Set
\begin{align*}
\bar{a}(t,T) &= \int_t^T a(t,u) du, \\
\bar{b}(t,T) &= \int_t^T b(t,u) du, \\
\bar{\alpha}(t,T)&=\int_t^T \alpha(t,u)\mu_t(du), \\
\bar{\beta}(t,T)&=\int_t^T \beta(t,u)\mu_t(du).
\end{align*}

\begin{theorem}\label{thm:dc}
Assume that {\bf (A1)}-{\bf (A2)} and {\bf (B1)}-{\bf (B4)} hold. Then $Q^*$ is an ELMM if and only if the following two conditions hold: 
\begin{align}
&\int_0^t f(s,s)ds + \sum_{U_i \le t} g(U_i,U_i)  = \int_0^t (r_s + h_s) ds - \sum_{U_i \le t} \log (1-\Gamma_i) , \label{dc1} \\
&\bar a(t,T) +\bar \alpha(t,T) =  \half \parallel \bar b(t,T) + \bar \beta(t,T)\parallel^2 
+ \int_t^T\left(e^{-g(t,u)}-1\right)\nu(t,du), \label{dc2}
\end{align}
$0 \le t \le T\le T^*$, $dQ^* \otimes dt$-almost surely on $\{t < \tau \}$. 
\end{theorem}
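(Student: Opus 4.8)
The plan is to read off \eqref{dc1}--\eqref{dc2} from the requirement that each discounted bond price $S^T_t:=(X_t^0)^{-1}P(t,T)$ be a local $Q^*$-martingale, which by \eqref{EMM} is precisely the statement that $Q^*$ is an ELMM. Concretely, I would compute the $Q^*$-semimartingale dynamics of $S^T$ from \eqref{PtT} and impose that its predictable finite-variation part vanishes. Writing $S^T_t=\ind{t<\tau}\,e^{Z_t}$ with
\[
Z_t=-\int_0^t r_s\,ds-\int_t^T f(t,u)\,du-\int_t^T g(t,u)\,\mu_t(du),
\]
the first task is the semimartingale decomposition of $Z$. Interchanging $d_t$ and $\int\!du$ in \eqref{f1}--\eqref{g1} by a stochastic Fubini argument (legitimate under the integrability and boundedness in (B1)--(B3)) together with the Leibniz rule for the moving lower limit, the continuous part of $dZ_t$ equals $\big[f(t,t)-r_t-\bar a(t,T)-\bar\alpha(t,T)\big]dt-\big[\bar b(t,T)+\bar\beta(t,T)\big]\cdot dW_t$, while the time-dependent measure $\mu_t$ generates two distinct families of jumps: at each announcement time $S_i$ the atom at $U_i$ \emph{enters} the sum, giving $\Delta Z_{S_i}=-g(S_i,U_i)$, and at each risky time $U_i$ the atom \emph{leaves} the interval $(t,T]$, giving $\Delta Z_{U_i}=+g(U_i,U_i)$.

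Next I would apply It\^o's formula to $e^{Z_t}$ and then the product rule to $S^T_t=\ind{t<\tau}\,e^{Z_t}$, using $d\,\ind{t<\tau}=-dH_t=-dM_t-dH^p_t$ and the decomposition \eqref{Hp} of the compensator. The predictable finite-variation part of $S^T$ splits into a continuous ($dt$) piece and jumps at the predictable times $U_i$. The continuous piece gathers the It\^o drift, the second-order term $\tfrac12\|\bar b(t,T)+\bar\beta(t,T)\|^2$, the contribution $-h_t$ from the absolutely continuous part of $H^p$, and---crucially---the compensator of the jumps at the \emph{totally inaccessible} announcement times $S_i$, which under (B4) equals $\int_t^T(e^{-g(t,u)}-1)\,\nu(t,du)$. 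Setting this continuous drift to zero on $\{t<\tau\}$ and dividing by $S^T_{t-}>0$ yields
\[
f(t,t)-r_t-h_t=\bar a(t,T)+\bar\alpha(t,T)-\tfrac12\|\bar b(t,T)+\bar\beta(t,T)\|^2-\int_t^T(e^{-g(t,u)}-1)\,\nu(t,du).
\]
Since the left-hand side is independent of $T$ while the right-hand side is continuous in $T$ and tends to $0$ as $T\downarrow t$, both sides must vanish identically; this produces $f(t,t)=r_t+h_t$ (the absolutely continuous contribution to \eqref{dc1}) and, substituting back, exactly \eqref{dc2}.

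The jumps at the predictable times $U_i$ are handled directly, since a local martingale must satisfy $\E[\Delta S^T_{U_i}\mid\cF_{U_i-}]=0$. On $\{U_i\le\tau\}$ one has $Q^*(\tau=U_i\mid\cF_{U_i-})=\Delta H^p_{U_i}=\Gamma_i$, so with this conditional probability default occurs and $S^T$ drops to $0$, whereas on the complementary event the atom leaves and $S^T_{U_i}=S^T_{U_i-}\,e^{g(U_i,U_i)}$. Hence
\[
\E[\Delta S^T_{U_i}\mid\cF_{U_i-}]=S^T_{U_i-}\big[(1-\Gamma_i)\,e^{g(U_i,U_i)}-1\big],
\]
which vanishes if and only if $g(U_i,U_i)=-\log(1-\Gamma_i)$. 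Summing over $U_i\le t$ and combining with $f(t,t)=r_t+h_t$ gives \eqref{dc1}. For the converse, under \eqref{dc1}--\eqref{dc2} the same computation shows the drift vanishes, and (A1)--(A2), (B1)--(B4) supply the integrability ensuring that each $S^T$ is a genuine local $Q^*$-martingale, so $Q^*$ is an ELMM.

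The main obstacle is the careful bookkeeping behind the first two steps: justifying the stochastic Fubini interchange for the maturity-dependent integrals, correctly extracting the two different jump families produced by the moving measure $\mu_t$ (entry at $S_i$, exit at $U_i$), and---most delicately---keeping the two compensation mechanisms apart. The announcement jumps sit at totally inaccessible times and are absorbed into the continuous drift through $\nu$, hence into \eqref{dc2}, whereas the jumps at the predictable times $U_i$ are compensated through $\Gamma_i=\Delta H^p_{U_i}$ and feed \eqref{dc1}. Ensuring these do not contaminate one another, and that the limit $T\downarrow t$ cleanly disentangles the short-rate relation $f(t,t)=r_t+h_t$ from the forward drift \eqref{dc2}, is where the real work lies.
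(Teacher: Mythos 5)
Your proposal is correct and is essentially the paper's own argument: the paper likewise derives the semimartingale decomposition of the moving-atom integral $\int_t^T g(t,u)\,\mu_t(du)$, with atoms entering at $S_i$ and leaving at $U_i$ (Lemma \ref{thelemma}), exponentiates via It\^o and compensates the announcement jumps by $\nu$ (Proposition \ref{prop43}), multiplies by the survival indicator $\ind{\tau>t}$ using the Doob--Meyer decomposition of $H$, and reads off \eqref{dc1}--\eqref{dc2} from the vanishing of the predictable finite-variation part of the discounted bond price. The remaining differences are bookkeeping: the paper keeps the $f$- and $g$-exponentials as separate factors $F(\cdot,T)$ and $G(\cdot,T)$, extracts the condition at the risky times $U_i$ by explicitly compensating the covariation between the survival indicator and $F(\cdot,T)G(\cdot,T)$ through $\Gamma(ds,dx)$ rather than by your equivalent conditional-centering identity $\E^*[\Delta S^T_{U_i}\mid \cF_{U_i-}]=0$ at predictable times, and disentangles \eqref{dc1} from \eqref{dc2} by setting $t=T$ instead of letting $T\downarrow t$.
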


In comparison to the classical  HJM drift condition  in the default-risk free case, $\bar a(t,T)=\half  \parallel \bar b(t,T) \parallel^2  $, a number of additional terms appear here. First, Equation \eqref{dc1} under $g(.,.)=0$ and $\cU=\emptyset$ is a well-known condition in intensity-based dynamic term structure models and relates the short rate accumulated by the bond, $f(t,t)$, with the risk-free short rate plus a compensation for default risk.  
The additional terms incorporate additional returns due to the extra default risk at risky times. These terms appear, to the best of our knowledge, for the first time in defaultable HJM-models. Technically, they originate from joint jumps in $1-H$ and its compensator. It turns out, that if $\Delta H^p \neq 0$, then  a classical HJM-approach with $g(.,.)=0$ allows for arbitrage profits.

The additional term in  \eqref{dc2}, $\int_t^T\left(e^{-g(t,u)}-1\right)\nu(t,du)$, appears as compensation for jumps in the term structure at news  news arrival times $S_1,S_2,\dots$ and can be linked to similar expressions in classical HJM-Models with jumps as for example in \cite{eberlein-oezkan-03}.

The following simple example illustrates the extension of our approach over intensity-based models and builds up intuition on condition \eqref{dc1}. A prominent representative is the Merton model, discussed in Example \ref{exp:merton_model}. In Section \ref{sec:filtering}, we also provide evidence that in models with incomplete information the proposed framework applies. For practical applications we will develop piecewise stochastic continuous affine models in Section \ref{secAffine}.

\begin{example}\label{ex:doublystochastic}
Consider a non-negative integrable and progressive process $\lambda$, constants $0<u_1< \dots < u_N$,  positive random variables $\lambda'_1,\dots,\lambda'_N$, with $\lambda'_i$ being $\cF_{u_i}$-measurable, and set
$$ \Lambda_t = \int_0^t \lambda(s) ds + \sum_{u_i \le t} \lambda'_i. $$ 
Let $\zeta$ be a standard exponential random variable, independent from $\Lambda$, and  set 
$$ \tau = \inf\{t\ge 0: \Lambda_t \ge \zeta \}. $$ 
This is a so-called doubly stochastic model and many variants of this example have been successfully applied in credit risk (see \cite{BieleckiRutkowski2002} and \cite{JeanblancChesneyYor2009}, for example). In the cases previously studied in the literature, however, instead of constant times $u_i$, totally inaccessible stopping times were considered, such that $H^p$ turns out to be absolutely continuous (see \cite{SchererSchmidSchmidt12} for an example). Here, we have $\Delta H_{u_i}^p > 0$ because $u_i$ is a risky time: by the memoryless-property of exponential random variables,
\begin{align}\label{temp321} 
Q^*(\tau=u_i|\tau \ge u_i) = Q^*(\lambda'_i \ge \zeta)=\E^*[1-\exp(-\lambda'_i)]. 
\end{align}
If $\Lambda$ is {deterministic} and the short-rate vanishes, we obtain the following term-structure
\begin{align*} 
	P(t,T) &= \ind{\tau > t} Q^*(\tau >T | \tau > t ) 
	= \ind{\tau > t} \exp\Big( - \int_t^T \lambda(s) ds - \sum_{u_i \in (t,T]}\lambda'_i\Big) ,
\end{align*}
which clearly falls into the class of models considered here. 
A simple computation yields 
\begin{align} 
H_t^p = \int_0^{t \wedge \tau} \lambda(s) ds + \sum_{i:u_i \le (t \wedge \tau)} (1-e^{-\lambda'_i})
\end{align}
and it is easily checked that the drift conditions \eqref{dc1}-\eqref{dc2} hold.\hfill $\diamond$
\end{example}

The proof of Theorem \ref{thm:dc} will make use of the following lemma in which we derive the canonical decomposition of the second integral in \eqref{PtT}, denoted by
\begin{align}
\label{eq_dyn_G}
I(t,T) := \int_t^T g(t,u)\mu_t(du), \qquad 0\leq t\leq T.
\end{align}
\begin{lemma}\label{thelemma}
Assume that {\bf(A1)}, {\bf(A2)}, and {\bf(B1)}, {\bf(B2)} hold. Then, for each $T \in [0,T^*]$ the process
$(I(t,T))_{0 \le t \le T}$ is a special semimartingale and
\begin{align*}
I(t,T) &= \int_0^t\bar{\alpha}(s,T)ds + \int_0^t\bar{\beta}(s,T)\cdot dW_s +  \int_0^t\int_0^T g(s,u)\ind{s<u}\mu(ds,du) -\int_0^t  g(s,s) \mu^U(ds)
\end{align*}
with $\mu^U(ds)=\sum_{i=1}^n \delta_{U_i}(ds)$.
\end{lemma}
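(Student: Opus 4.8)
The plan is to make the atomic structure of $\mu_t$ explicit, reduce $I(\cdot,T)$ to a finite sum of elementary products, decompose each product by It\^o's product rule, and finally reassemble the pieces into the four claimed terms. By {\bf(A2)} there are only finitely many risky times $U_1,\dots,U_N$, each announced by some $S_i<U_i$, so that $\mu_t(du)=\sum_{S_i\le t}\delta_{U_i}(du)$; since the integral defining $I(t,T)$ runs over $(t,T]$, it retains exactly the indices with $S_i\le t<U_i\le T$. Writing $A_i(t):=\ind{S_i\le t<U_i}=\ind{S_i\le t}-\ind{U_i\le t}$ (the two indicators being disjoint because $S_i<U_i$), this yields the finite representation
$$ I(t,T)=\sum_{i:\,U_i\le T} g(t,U_i)\,A_i(t),\qquad 0\le t\le T. $$
Here each $A_i$ is a c\`adl\`ag finite-variation process with a single up-jump of size $+1$ at $S_i$ and a single down-jump of size $-1$ at $U_i$, while by \eqref{g1} the process $t\mapsto g(t,U_i)$ is a continuous It\^o process. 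One technical point to settle first is that $g(\cdot,U_i)$ is a genuine semimartingale even though $U_i$ is random: since $U_i$ is $\cF_{S_i}$-measurable, the integrands $\alpha(s,U_i)\ind{S_i\le s}$ and $\beta(s,U_i)\ind{S_i\le s}$ are progressively measurable, so the integrals in \eqref{g1} evaluated at $u=U_i$ are well defined, with finiteness of the drift and of $g(0,U_i)$ guaranteed by {\bf(B2)} and {\bf(B1)}.

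Next I would apply integration by parts to each product $g(t,U_i)A_i(t)$. Because $g(\cdot,U_i)$ is continuous, $g(t-,U_i)=g(t,U_i)$ and the covariation $[g(\cdot,U_i),A_i]=\sum_{s\le t}\Delta g(s,U_i)\Delta A_i(s)$ vanishes (the continuous factor has no jumps), so
$$ d\big(g(t,U_i)A_i(t)\big)=A_i(t-)\,dg(t,U_i)+g(t,U_i)\,dA_i(t). $$
Integrating the first term gives $\int_0^t\ind{S_i\le s<U_i}\big(\alpha(s,U_i)\,ds+\beta(s,U_i)\cdot dW_s\big)$, the boundary instants $\{s=S_i\}$ and $\{s=U_i\}$ being Lebesgue- and $dW$-null so that the endpoint convention is immaterial; integrating the second term against $dA_i=\delta_{S_i}-\delta_{U_i}$ gives the pure-jump contribution $g(S_i,U_i)\ind{S_i\le t}-g(U_i,U_i)\ind{U_i\le t}$.

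Summing over the finitely many $i$ with $U_i\le T$ and interchanging this finite sum with the integrals, I would then identify the four resulting pieces with the stated ones. Recognising $\bar{\alpha}(s,T)=\sum_{i:U_i\le T}\alpha(s,U_i)\ind{S_i\le s<U_i}$ and the analogous identity for $\bar{\beta}$ turns the drift and diffusion contributions into $\int_0^t\bar{\alpha}(s,T)\,ds$ and $\int_0^t\bar{\beta}(s,T)\cdot dW_s$. The up-jumps assemble into $\int_0^t\!\int_0^T g(s,u)\ind{s<u}\,\mu(ds,du)=\sum_{i:\,S_i\le t,\,U_i\le T}g(S_i,U_i)$, where $\ind{s<u}$ is automatically one on the support of $\mu$ since $S_i<U_i$; the down-jumps assemble into $-\int_0^t g(s,s)\,\mu^U(ds)=-\sum_{i:\,U_i\le t}g(U_i,U_i)$, the constraint $U_i\le t\le T$ making the extra restriction $U_i\le T$ redundant. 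This is exactly the claimed decomposition.

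It remains to argue that $I(\cdot,T)$ is special. It is a semimartingale (a finite sum of semimartingales) with only finitely many jumps on $[0,T^*]$, located at the $S_i$ and $U_i$. The jumps at the predictable times $U_i$ are harmless: $U_i$ is known already at $S_i<U_i$, so $g(U_i,U_i)$ is $\cF_{U_i-}$-measurable and $\sum_i g(U_i,U_i)\ind{U_i\le t}$ is predictable of finite variation, contributing nothing to the martingale part. The jumps at the totally inaccessible times $S_i$ are the only genuine issue, since for specialness they must be locally integrable; here I would localise by the a.s.-finite running suprema $\sup_{S_i\le s\le\,\cdot}|g(s,U_i)|$ (which are adapted because $U_i\in\cF_{S_i}$ and $g(\cdot,U_i)$ is continuous), so that the stopped process has bounded jumps and is therefore special. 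Since specialness is a local property, $I(\cdot,T)$ is special. The routine content is the product-rule computation; the points that genuinely need care are this local-integrability argument for the surprise jumps at $S_i$ and the measurability bookkeeping that makes the random-maturity integrals well defined — these, rather than the algebra, are the main obstacle.
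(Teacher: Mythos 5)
Your core computation is correct and is essentially a per-atom version of the paper's argument: where you write $I(t,T)=\sum_{i:U_i\le T}g(t,U_i)A_i(t)$ with $A_i(t)=\ind{S_i\le t<U_i}$ and integrate each product by parts, the paper instead decomposes the killed processes $\Ind_{[0,u)}(t)g(t,u)$ for \emph{deterministic} $u$ and only afterwards integrates against $\mu(ds,du)$, interchanging the finitely many integrals; both routes yield the same four terms. However, the two points you yourself flag as the real content both fail as written. First, $g(\cdot,U_i)$ is not a semimartingale --- it is not even adapted, since for $t<S_i$ the value $g(t,U_i)$ depends on $U_i$, which is only $\cF_{S_i}$-measurable. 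Inserting $\ind{S_i\le s}$ into the integrands of \eqref{g1} does not repair this: it produces a different process (the integrals over $[0,S_i]$ are deleted, and $g(0,U_i)$ is still not $\cF_0$-measurable). The adapted object you may legitimately use is $Z_i(t):=g(t,U_i)\ind{S_i\le t}$, a genuine semimartingale which jumps by $g(S_i,U_i)$ at $S_i$; the product rule for $Z_iA_i$ then carries a nonvanishing covariation $[Z_i,A_i]_t=g(S_i,U_i)\ind{S_i\le t}$, and it is this covariation --- not your term $\int g(s,U_i)\,dA_i(s)$ --- that produces the up-jump contribution. The final formula is unchanged, so this gap is repairable (the paper's device of keeping $u$ deterministic until the last step avoids it by design), but as written you apply It\^o's product rule to a non-adapted process.

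The second gap is not repairable by your method. To prove specialness you localize at the first time the running supremum of $|g(\cdot,U_i)|$ on $[S_i,\cdot\,]$ exceeds a level and claim the stopped process has bounded jumps. It does not: if that first-entrance time is $S_i$ itself --- precisely the case where the jump $g(S_i,U_i)$ is large --- the stopped process still contains the entire jump. This is not a cosmetic defect: jumps occurring at non-predictable times, whose size is only revealed at the jump time, cannot be localized away at all (a compound Poisson process with Cauchy-distributed jumps is a finite-variation, finite-activity semimartingale that is \emph{not} special). Since $U_i$, and hence the jump size $g(S_i,U_i)$, is created only at time $S_i$, local integrability of these jumps is a genuine requirement that no first-entrance localization can manufacture. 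Your treatment of the jumps at the $U_i$ is fine --- there predictability of $U_i$ and $\cF_{U_i-}$-measurability of $g(U_i,U_i)$ make that term predictable of finite variation, hence special with no integrability needed --- but your argument for the $S_i$-jumps fails, and this is a point on which the paper's own proof is also very terse, since it only comments on the $\mu^U$ term.
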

\begin{proof}
We start with the observation that, by the definition of $\mu_t$,
\begin{align}
\nonumber I(t,T) &= \int_0^t\int_t^T g(t,u)\mu(ds,du)\\
\nonumber &= \int_0^t\int_0^T\ind{u>t}g(t,u)\mu(ds,du)\\
\label{eq_dyn1}
&= \int_0^t\int_0^T\Ind_{[0,u)}(t)g(t,u)\mu(ds,du).
\end{align}
The semimartingales $(\Ind_{[0,u)}(t)g(t,u))$ have the following canonical decompositions,
\begin{align}\label{dyng}
\nonumber \Ind_{[0,u)}(t)g(t,u) &= g(0,u) + \int_0^t \Ind_{[0,u]}(v)\, dg(v,u)+ \int_0^t g(v,u)d(\Ind_{[0,u)}(v)) \\
&= g(0,u)  + \int_0^t\Ind_{[0,u]}(v)\alpha(v,u)dv + \int_0^t\Ind_{[0,u]}(v)\beta(v,u)\cdot dW_v - g(u,u)\ind{u\leq t} 
\end{align}
and we obtain that
\begin{align*}
\eqref{eq_dyn1} &= \int_0^t\int_0^T g(0,u) \mu(ds,du) + \int_0^t\int_0^T \int_0^t\Ind_{[0,u]}(v)\alpha(v,u)dv  \,\mu(ds,du) \\
& + \int_0^t\int_0^T \int_0^t\Ind_{[0,u]}(v)\beta(v,u) \cdot dW_v  \,\mu(ds,du) - \int_0^t\int_0^T g(u,u)\ind{u\leq t} \mu(ds,du) \\ & =: (1') + (2') + (3')+(4').
\end{align*}
With (B2) it is possible to interchange the appearing integrals as the integral with respect to $\mu$ is  a finite sum. Hence,
\begin{align*}
(2') &= \int_0^t\int_0^t \int_0^T \Ind_{[0,u]}(v)\alpha(v,u)  \mu(ds,du) dv \\
&= \int_0^t \int_0^v \int_0^T\Ind_{[0,u]}(v)\alpha(v,u)  \mu(ds,du) dv
+ \int_0^t \int_v^t \int_0^T \Ind_{[0,u]}(v)\alpha(v,u)  \mu(ds,du) dv \\
&= \int_0^t \int_0^v \int_0^T\Ind_{[0,u]}(v)\alpha(v,u)  \mu(ds,du) dv
+ \int_0^t\int_0^T \int_0^s\Ind_{[0,u]}(v)\alpha(v,u)  dv \, \mu(ds,du)
\end{align*}
with an analogous 
expression for $(3')$.
Note that the first term in the last line equals $\int_0^t \bar \alpha(v,T) dv$. By \eqref{dyng},
\begin{align*}
 \int_0^s\Ind_{[0,u]}(v)\alpha(v,u)  dv + \int_0^s\Ind_{[0,u]}(v)\beta(v,u) \cdot  dW_v &= \Ind_{[0,u)}(s)g(s,u) - g(0,u) +  g(u,u) \ind{u \le s}
\end{align*}
such  that \eqref{eq_dyn1} is equal to
\begin{align*}
  \int_0^t \bar \alpha(v,T) dv + \int_0^t \bar \beta(v,T)\cdot dW_v + \int_0^t\int_0^T \Ind_{[0,u)}(s) g(s,u)\mu(ds,du) -  \int_0^t\int_0^T \Ind_{[s,t]}(u) g(u,u) \mu(ds,du).
\end{align*}
By Assumption (A2),
\begin{align*}
\int_0^t\int_0^T \Ind_{[s,t]}(u) g(u,u) \mu(ds,du) = \sum_{U_i \le t} g(U_i,U_i) = \int_0^t g(s,s) \mu^U(ds) 
\end{align*}
which is a special semimartingale and we conclude.
\end{proof}

The previous lemma allows us to obtain the semimartingale representation of
$$ G(t,T):= \exp(-I(t,T)), \qquad 0 \le t \le T. $$
\begin{proposition}\label{prop43}
	Assume that {\bf(A1)}, {\bf(A2)} and {\bf(B1)}, {\bf(B2)} and {\bf (B4)} hold. Then,
	\begin{align*}
		\frac{dG(t,T)}{G(t-,T)} &= \bigg( -\bar{\alpha}(t,T)+ \half \parallel \bar \beta(t,T) \parallel ^2 + \int_t^T\left(e^{-g(t,u)}-1\right)\nu(t,du)\bigg) dt \\
		&- \bar \beta(t,T) \cdot dW_t  + \left(e^{g(t,t)}-1\right)\mu^U(dt) + dM^1_t,
	\end{align*}
with 
a local martingale $M^1$.
\end{proposition}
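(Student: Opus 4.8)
The plan is to apply the change-of-variables formula to $G(t,T)=F(I(t,T))$ with $F(x)=e^{-x}$, feeding in the semimartingale decomposition of $I(\cdot,T)$ obtained in Lemma~\ref{thelemma}. From that decomposition I would first isolate the continuous local-martingale part $N_t=\int_0^t\bar{\beta}(s,T)\cdot dW_s$, whose predictable quadratic variation is $\langle N\rangle_t=\int_0^t\|\bar{\beta}(s,T)\|^2\,ds$, the continuous finite-variation part $\int_0^t\bar{\alpha}(s,T)\,ds$, and the purely discontinuous part carried by $\mu$ and $\mu^U$. By (A2) the jumps of $I(\cdot,T)$ sit at the totally inaccessible announcing times $S_i$, with $\Delta I_{S_i}=g(S_i,U_i)$ whenever $U_i\le T$, and at the predictable risky times $U_i$, with $\Delta I_{U_i}=-g(U_i,U_i)$; since each $g(\cdot,u)$ is continuous these jump sizes are unambiguous, and since the graphs $\llbracket S_i\rrbracket,\llbracket U_j\rrbracket$ are pairwise disjoint the two jump families never collide.

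Applying It\^o's formula then gives
\[
G_t=G_0-\int_0^t G_{s-}\,dI_s+\half\int_0^t G_{s-}\,d\langle N\rangle_s+\sum_{s\le t}\Big(e^{-I_s}-e^{-I_{s-}}+G_{s-}\Delta I_s\Big).
\]
The key simplification is that the jump part of $\int_0^t G_{s-}\,dI_s$ equals $\sum_{s\le t}G_{s-}\Delta I_s$ and cancels the matching term in the sum, so that the last two contributions collapse to $\half\int_0^t G_{s-}\|\bar{\beta}(s,T)\|^2\,ds+\sum_{s\le t}G_{s-}(e^{-\Delta I_s}-1)$. Inserting the continuous part $dI^c_s=\bar{\alpha}(s,T)\,ds+\bar{\beta}(s,T)\cdot dW_s$ and dividing through by $G_{t-}$ produces the $-\bar{\alpha}\,dt$, $-\bar{\beta}\cdot dW$ and $\half\|\bar{\beta}\|^2\,dt$ terms of the claim.

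It remains to treat $\sum_{s\le t}G_{s-}(e^{-\Delta I_s}-1)$, which I would split along the two jump families. The $U_i$-jumps contribute $\sum_{U_i\le t}G_{U_i-}(e^{g(U_i,U_i)}-1)=\int_0^t G_{s-}(e^{g(s,s)}-1)\,\mu^U(ds)$, i.e.\ after dividing by $G_{t-}$ exactly the uncompensated term $(e^{g(t,t)}-1)\mu^U(dt)$; here I use that for $t\le T$ every $U_i\le t$ automatically satisfies $U_i\le T$. The $S_i$-jumps contribute $\int_0^t\int_0^T G_{s-}(e^{-g(s,u)}-1)\,\mu(ds,du)$, the range $u\le T$ encoding that only announcements of a risky time $U_i\le T$ affect the bond of maturity $T$. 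Compensating this against the dual predictable projection $\nu$ of (B4),
\[
\int_0^t\!\int_0^T\! G_{s-}(e^{-g(s,u)}-1)\mu(ds,du)=\int_0^t\!\int_0^T\! G_{s-}(e^{-g(s,u)}-1)\nu(s,du)\,ds+\int_0^t\!\int_0^T\! G_{s-}(e^{-g(s,u)}-1)(\mu-\nu)(ds,du),
\]
the integrability assumed in (B4) makes the last integral a local martingale, which after division by $G_{t-}$ I collect into $M^1$. Finally, since every atom of $\mu$ has $s<u$, testing the defining identity of $\nu$ against the predictable function $\ind{u\le s}$ shows that $\nu$ charges no mass on $\{u\le s\}$, so $\int_0^T(\cdots)\nu(s,du)=\int_s^T(\cdots)\nu(s,du)$ and the drift becomes $\int_t^T(e^{-g(t,u)}-1)\nu(t,du)\,dt$, as asserted. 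I expect the main obstacle to be precisely this last bookkeeping: keeping track of which jumps are active for a fixed maturity $T$, verifying the support property of $\nu$, and checking the (B4) integrability that certifies $M^1$ as a genuine local martingale.
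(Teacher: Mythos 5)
Your proposal is correct and follows essentially the same route as the paper: apply It\^o's formula to $G=e^{-I}$ using the semimartingale decomposition of Lemma \ref{thelemma}, split the jump sum into the $U_i$-family (left uncompensated as the $\mu^U$ term) and the $S_i$-family, and compensate the latter by $\nu(s,du)\,ds$ via {\bf (B4)} to produce the drift term and the local martingale $M^1$; your extra bookkeeping (the support property of $\nu$ on $\{u>s\}$, disjointness of the two jump families) only makes explicit what the paper's two-line proof leaves implicit. One tiny correction: the total inaccessibility of the $S_i$ (hence the disjointness of $\llbracket S_i\rrbracket$ and $\llbracket U_j\rrbracket$) comes from {\bf (B4)}, not from {\bf (A2)}.
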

\begin{proof}
The It\^o-formula together the representation of $G$ given in Lemma \ref{thelemma} yields that 
\begin{align}
	G(t,T) &= G(0,T) + \int_0^t G(s-,T)\Big( - \bar{\alpha}(s,T)+ \half \parallel \bar \beta(s,t) \parallel^2 \Big)ds  - \int_0^t G(s-,T) \bar \beta (s,T) \cdot dW_s  \nonumber \\
	&+\int_0^t \int_0^T G(s-,T)\left(e^{-g(s,u)\Ind_{\{u>s\}}}-1\right)\mu(ds,du) 
	 + \int_0^t G(s-,T)\left(e^{g(s,s)}-1\right)\mu^U(ds).  \label{dynH}
\end{align}

Using Assumption (B4), we compensate $\mu(ds,du)$ by $\nu(s,du)ds$ and obtain the result.
\end{proof}

\begin{proof}[Proof of Theorem \ref{thm:dc}]
Set $F(t,T) := \exp\Big(-\int_t^T f(t,u) du \Big)$, $E(t):=\ind{\tau > t}$ such that
\begin{align*}
P(t,T) = E(t) F(t,T) G(t,T) .
\end{align*}
Then, by integration by parts,
\begin{align}\label{temp399}
dP(t,T) &=  F(t,T) G(t-,T) d E(t) + E(t-) d(F(t,T) G(t,T)) + d [E, F(.,T) G(.,T)]_t \\
& =: (1'')+(2'')+(3'') \nonumber
\end{align}
and we compute the according terms in the following. Regarding $(1'')$, we  obtain from  \eqref{Hp}, that
\begin{align}\label{defM2}
E(t) + \int_0^{t \wedge \tau } h_s ds
	 + \int_0^{t \wedge \tau}  \int_{\R} x \, \Gamma(ds,dx)=: M^2_t  
\end{align}
is a martingale.
Regarding (2$''$), we have that
\begin{align*}
d (F(t,T) G(t,T)) &= G(t-,T) d F(t,T)+  F(t,T) d G(t,T) + d\,\langle G^c(.,T), F^c(.,T) \rangle_t,
\end{align*}
where  $F^c(.,T)$ and $G^c(.,T)$ are the continuous local martingale parts of $F(.,T)$ and $G(.,T)$, respectively.
Computing the dynamics of $F(t,T)$ follows the original arguments of \cite{HJM}, see Lemma 6.1 in \cite{Filipovic2009},
such that for $0 \le t \le T$,
\begin{align}\label{FtT}
	d F(t,T) = F(t,T) \Big( f(t,t) - \bar a(t,T) + \half \parallel \bar b(t,T) \parallel^2 \Big) dt - F(t,T) \bar b(t,T) dW_t.
\end{align}
Together with Proposition \ref{prop43} this leads to
\begin{align}\label{temp419}
	\lefteqn{ \frac{d (F(t,T) G(t,T))}{F(t,T)G(t-,T)} = M_t^3 +\left(e^{g(t,t)}-1\right)\mu^U(dt) } \qquad  \nonumber\\
	&+\bigg( f(t,t) - \bar a(t,T) + \half \parallel \bar b(t,T) + \bar \beta(t,T)  \parallel^2   -\bar{\alpha}(t,T)  \bigg) dt  \\
	& - (\bar b(t,T) + \bar \beta(t,T)) \cdot  dW_t + \int_t^T\left(e^{-g(t,u)}-1\right)\nu(t,du)dt, \nonumber
\end{align}
where we used that  $\parallel \bar b(t,T) \parallel^2 + \parallel \bar \beta(t,T) \parallel^2 + 2 \bar b(t,T) \cdot \bar \beta(t,T)^\top = \parallel \bar b(t,T) + \bar \beta(t,T) \parallel ^2$ and a local martingale  $M^3$.
In view of (3$''$), we obtain from \eqref{dynH} that
\begin{align*}
	\frac{\Delta G(t,T)}{G(t-,T)} 
		& =  \int_t^T (e^{-g(t,u)}-1) \mu(\{t\}, du) + (e^{g(t,t)}-1)\mu^U(\{t\}).
\end{align*}
By Assumption (B4),  $\Delta E(t) \mu(\{t\},\R) = -\sum_{i\ge 1} \ind{\tau=t}\ind{S_i=t} = 0$. Hence, using \eqref{defM2},
\begin{align}\label{temp432}
	\begin{aligned}
	\sum_{0 < s \le t } \Delta E(s) \Delta G(s,T) &=  \int_0^t G(s-,T) (e^{g(s,s)}-1) \mu^U(\{s\}) d E(s) \\
	&= \int_0^t G(s-,T) (e^{g(s,s)}-1) \mu^U(\{s\}) d M^2_s \\
	&- \int_0^{t \wedge \tau}  \int_{\R}  G(s-,T) (e^{g(s,s)}-1) \mu^U(\{s\}) x \, \Gamma(ds,dx);
	\end{aligned}
\end{align}
where we used that for an integrable function $f:\R\to\R$, $\int f(s) \mu_T(\{s\}) ds = 0$ as $\mu$ is concentrated on a finite set. Note that    $(e^{g(t,t)}\mu^U(\{t\}))_{t \ge 0}$ is predictable due to Assumption (A2) and $\mu^U(\{s\}) \Gamma(ds,dx) = \Gamma(ds,dx)$.

Inserting \eqref{defM2}, \eqref{temp419} and \eqref{temp432} into \eqref{temp399}, we arrive that on $\{t < \tau\}$,
\begin{align*}
\frac{dP(t,T)}{P(t-,T)} &=  - h(t)  dt - \int_{\R} x \,\Gamma(dt,dx)\\
	 & +  \Big( f(t,t) +  \half \parallel \bar b(t,T) +\bar \beta(t,T) \parallel^2  - \bar a(t,T)  -\bar{\alpha}(t,T)\Big) dt \\
	 &+ \int_{\R}( e^{g(t,t)}-1)\mu^U(dt)\\
	 &+ \int_t^T\left(e^{-g(t,u)}-1\right)\nu(t,du)dt  \\
	 &- \int_{\R} (e^{g(t,t)} -1)x \, \Gamma(dt,dx)+ dM^4_t
\end{align*}
with a local martingale $M^4$. The process $(X_t^{-1}P(t,T))_{0 \le t \le T}$ is a local martingale if and only if the predictable part in the semimartingale decomposition vanishes. Letting $t=T$ one recovers
\begin{align*}
0 &= \int_0^t (f(s,s)-h(s)-r_s ) ds + \sum_{i:U_i \le t} \Big( e^{-g(U_i,U_i)}-1-\Gamma_ie^{g(U_i,U_i)} \Big)  
\end{align*}
for $0 \le t \le T^*$, on $\{t<\tau\}$, which is equivalent to $f(s,s)=h(s)+r_s $ and
\begin{align*}
1-e^{-g(U_i,U_i)} = \Gamma_i
\end{align*}
on $\{U_i \le T^* \wedge \tau\}$ such that \eqref{dc1} and \eqref{dc2} follow. The converse is easy to see. 
\end{proof}

\begin{example}[Announced random times]\label{ex:announcingtimes}
Consider a Poisson process  with intensity $1$ whose first $N$ jumping times $S_1<S_2< \dots < S_N$ denote the arrival times of news. There is a independent sequence $(\sigma_i)_{i \ge 1}$ of positive random variables with distribution function $F_\sigma$ and set
$$ U_i := S_i + \sigma_i. $$
Then, $U_i$ are announced by $S_i$ and we are just in a setting suggested by {\bf (B4)}. Assume for simplicity that $F_\sigma(x)=1-e^{-x}$, i.e.\ $\sigma_1$ is standard exponentially distributed and let 
$$ \tau = \inf\{t \ge 0:  t + \sum_{U_i \le t} 1\ge \Theta\} $$
with a standard exponential random variable $\Theta$, independent of all other appearing random variables. Then there is no deterministic risky time, i.e. $Q^*(\tau=t)=0$ for all $t \ge 0$. However,  each $U_i$ is a risky time because
$$ Q^*(\tau = U_i | S_i, \sigma_i, \tau \ge U_i) = 1-e^{-1}, $$ 
similar to   Equation \eqref{temp321}. 
\end{example}
A further example of an announced time will be studied in Section \ref{sec:filtering} under incomplete information.

\section{Generalized Merton Models}\label{sec:merton}

Inspired by  Merton's approach (see \cite{Merton1974} and Example \ref{exp:merton_model}) we assume that default of the company under consideration occurs when the company is not able to meet its liabilities. Payment dates of liabilities are considered deterministic, and hence payment dates turn out to be risky times in the sense of Section \ref{sec:HJM}. While in the previous section risky times were announced, but possibly random, in this section  we consider a deterministic and a finite  set $\cU:=\{u_1,\dots,u_N\}$ of times where default may occur with positive probability.  Denote
$$ \mu^M(du) = du + \sum_{i = 1}^N\delta_{u_i}(du). $$

Starting from \eqref{Hp}, we assume an appropriate modification of  {\bf (A2)}. 
\begin{itemize}
\item[{\bf (A2$^\prime$)}] 
The random measure $\Gamma(ds,ds)$ is given by  
$$ \Gamma([0,t],dx)=\sum_{i=1}^N \ind{u_i \le t} \delta_{\Gamma_i}(dx),$$  
with $\cF_{u_i-}$-measurable $\Gamma_i:\Omega \mapsto  (0,1)$, $i=1,\dots,N$.
\end{itemize}
A model satisfying {\bf (A2$^\prime$)} will be called \emph{generalized Merton model}. 
In this setting it is natural to consider $f=g$ directly, which we will do in the following. To this end, consider defaultable bond prices given by
	\begin{align}\label{PtTM}
	P^M(t,T) = \ind{\tau>t} \exp\bigg( -\int_t^T f(t,u) \mu^M(du)\bigg), \quad 0 \le t \le T \le T^*.
	\end{align}
	 We assume that  $f$ is an It\^o process and satisfies \eqref{f1}. Note that Assumption \ref{ass1} holds. 

Additionally, we require the following modifications of our previous assumptions:
\begin{itemize}
\item[{\bf (B1$^\prime$)}] the initial forward curve $f(\omega,0,t)$ is $\cF_0 \otimes \cB$-measurable, and integrable on $[0,T^*]$:
    $$\int_0^{T^*}|f(0,u)|<\infty, \qquad Q^*\text{-a.s.},$$
\item[{\bf (B2$^\prime$)}] the \emph{drift parameter} $a(\omega,s,t)$ is $\mathbb{R}$-valued $\mathcal{O}\otimes\mathcal{B}$-measurable and  integrable on $[0,T^*]$:
    $$\int_0^{T^*}\int_0^{T^*} |a(s,t)| ds\,dt<\infty, \quad Q^*\text{-a.s.},$$
\item[{\bf (B3$^\prime$)}] the \emph{volatility parameter} $b(\omega,s,t)$ is $\mathbb{R}^n$-valued, $\mathcal{O}\otimes\mathcal{B}$-measurable, and
    $$ \sup_{s,t\leq T^*} \parallel b(s,t) \parallel <\infty, \quad Q^*\text{-a.s.}$$
\end{itemize}

Moreover, we set
\begin{align}\label{abisbetabar}
\begin{aligned}
\bar{a}(t,T) &= \int_t^T a(t,u) \mu^M(du), \\
\bar{b}(t,T) &= \int_t^T b(t,u) \mu^M(du), \\
H'(t) &= \int_0^t h(s) ds - \sum_{i=1}^N \ind{u_i \le t } \log(1-\Gamma_i);
\end{aligned}
\end{align}
 note that in this section $\bar a$ has a different meaning from the previous section.
The following result, which is a direct corollary of Theorem \ref{thm:dc}, gives the desired drift condition in the generalized Merton models. 

\begin{corollary}\label{prop:dcm}
Assume  that {\bf (A1)},{\bf (A2$^\prime$)} and {\bf (B1$^\prime$)}-{\bf (B3$^\prime$)} hold. Then $Q^*$ is an ELMM if and only if the following two conditions hold:  
\begin{align}
\int_0^t f(s,s) \mu^M(ds) &= \int_0^t r_s ds + H'(t),  \label{dcm1}\\
\bar a(t,T)  &=  \half \parallel \bar b(t,T) \parallel^2 , \label{dcm2}
\end{align}
for $0 \le t \le T \le T^*$  $dQ^* \otimes dt$-almost surely on $\{t<\tau\}$.
\end{corollary}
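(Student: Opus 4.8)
The plan is to deduce the statement from Theorem \ref{thm:dc} by specialising the general framework of Section \ref{sec:HJM} to the generalised Merton setting. The key identification is to take $g=f$ (hence $\alpha=a$ and $\beta=b$) and to regard the deterministic risky times $u_1<\dots<u_N$ as announced times $U_i=u_i$. Since a deterministic positive time is announced, one may fix the family of measures as $\mu_t(du)=\sum_{i=1}^N\ind{S_i\le t}\delta_{u_i}(du)$; because the $u_i$ are deterministic this can be taken constant in $t$ on $(0,T^*]$, namely $\mu_t(du)=\sum_{i=1}^N\delta_{u_i}(du)$. Under this identification the exponent in \eqref{PtT} becomes $\int_t^T f(t,u)\,du+\int_t^T f(t,u)\sum_i\delta_{u_i}(du)=\int_t^T f(t,u)\mu^M(du)$, so that $P(t,T)$ of \eqref{PtT} coincides with $P^M(t,T)$ of \eqref{PtTM}. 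I would first record this correspondence and check that (A1), (A2$^\prime$) and (B1$^\prime$)--(B3$^\prime$) supply the data required by (A1)--(A2) and (B1)--(B3).

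Next I would translate the two drift conditions of Theorem \ref{thm:dc}. With $g=f$ and $\mu_t=\sum_i\delta_{u_i}$ one has, comparing the (differing) definitions in the two sections, $\bar a(t,T)+\bar\alpha(t,T)=\int_t^T a(t,u)\mu^M(du)$ and $\bar b(t,T)+\bar\beta(t,T)=\int_t^T b(t,u)\mu^M(du)$, i.e. exactly the quantities $\bar a(t,T)$ and $\bar b(t,T)$ of \eqref{abisbetabar}. Hence the right-hand side of \eqref{dc2} reduces to $\half\parallel\bar b(t,T)\parallel^2$ provided the compensation term $\int_t^T(e^{-g(t,u)}-1)\nu(t,du)$ vanishes; this is the point to establish carefully (see below), and it yields \eqref{dcm2}. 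For \eqref{dc1}, using $g(U_i,U_i)=f(u_i,u_i)$ the left-hand side becomes $\int_0^t f(s,s)\,ds+\sum_{u_i\le t}f(u_i,u_i)=\int_0^t f(s,s)\mu^M(ds)$, while the right-hand side equals $\int_0^t r_s\,ds+\int_0^t h(s)\,ds-\sum_{u_i\le t}\log(1-\Gamma_i)=\int_0^t r_s\,ds+H'(t)$ by the definition of $H'$ in \eqref{abisbetabar}; this is precisely \eqref{dcm1}, and its jump part at $u_i$ reproduces the relation $1-e^{-f(u_i,u_i)}=\Gamma_i$ obtained at the end of the proof of Theorem \ref{thm:dc}.

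The main obstacle is that the generalised Merton model sits slightly outside the literal hypotheses of Theorem \ref{thm:dc}: the announcing times of deterministic risky times are themselves deterministic, hence predictable rather than totally inaccessible, and $\Gamma_i$ is only $\cF_{u_i-}$-measurable in (A2$^\prime$) instead of $\cF_{S_i}$-measurable as in (A2), so (B4) does not apply verbatim. As noted in the remark following (B4), these are merely simplifying assumptions. The substantive point is that in the Merton setting the measure $\mu^M$ is fixed, so the term structure carries its atoms from the outset and there are \emph{no} announcement surprises: the associated announcement measure is supported on $\{t=0\}$, its dual predictable projection is again supported there, and consequently the density kernel $\nu(t,\cdot)$ vanishes for $dt$-almost every $t\in(0,T^*]$. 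This is exactly why the compensation term in \eqref{dc2} disappears and only the jumps of $t\mapsto I(t,T)$ across the $u_i$ survive, through the term $g(u_i,u_i)$ already present in Lemma \ref{thelemma} and in \eqref{dc1}. Equivalently, one may simply re-run the (now shorter) computation behind Theorem \ref{thm:dc} with $g=f$ and deterministic atoms, in which the step of Proposition \ref{prop43} invoking (B4) becomes vacuous; I expect verifying that these deterministic, self-compensating announcements contribute nothing to the predictable drift on $(0,T^*]$ to be the only delicate point.
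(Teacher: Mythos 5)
Your proposal takes exactly the paper's route: the paper justifies Corollary \ref{prop:dcm} with the single remark that it is a ``direct corollary of Theorem \ref{thm:dc}'', and your identification $g=f$, $U_i=u_i$, $\mu_t\equiv\sum_{i}\delta_{u_i}$ (announcement collapsed to time $0$), under which \eqref{PtT} becomes \eqref{PtTM} and the conditions \eqref{dc1}--\eqref{dc2} become \eqref{dcm1}--\eqref{dcm2}, is precisely that specialisation spelled out. Your flagging and resolution of the two mismatches the paper glosses over --- that {\bf (B4)} cannot hold verbatim because the announcement measure $\sum_i\delta_{(0,u_i)}(dt,du)$ admits no density in $t$, so one must instead note that with all atoms present from the outset there is nothing to compensate and the term $\int_t^T(e^{-g(t,u)}-1)\nu(t,du)$ simply drops out, and that {\bf (A2$'$)} only gives $\cF_{u_i-}$-measurability of $\Gamma_i$, which suffices since the $u_i$ are deterministic and the no-arbitrage relation forces $\Gamma_i=1-e^{-f(u_i,u_i)}$, itself $\cF_{u_i-}$-measurable --- is correct and is exactly the care needed to make the paper's one-line justification rigorous.
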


\begin{example}[The Merton model]\label{exp:merton_model}
The paper \cite{Merton1974}  considers a simple capital structure of a firm, consisting only of equity and a zero-coupon bond with maturity $U>0$. The firm defaults at $U$ if the total market value of its assets is not sufficient to cover the liabilities. 

We are interested in setting up an arbitrage-free market for credit derivatives and consider a market of defaultable bonds $P^M(t,T)$, $0 \le t \le T \le T^*$ with $0< U \le T^*$ as basis for more complex derivatives. In a stylized form the Merton model can be represented by a Brownian motion $W$ denoting the normalized logarithm of the firm's assets, a constant $K>0$ and the default time
$$ \tau = \begin{cases}
U      & \text{if }W_U \le K \\
\infty & \text{otherwise}.
\end{cases}$$ 
Assume for simplicity constant interest rate $r$ and let $\bbF$ be the filtration generated by $W$.  Then $P^M(t,T) =e^{-r(T-t)}$ whenever $T<U$ because this bonds do not carry default risk. On the other hand, for $t< U \le T $, 
\begin{align}\label{Mertonbondprice}
P^M(t,T) &= e^{-r(T-t)}\E^*[\ind{\tau>T}|\cF_t]= e^{-r(T-t)} \E^*[\ind{\tau=\infty}|\cF_t]= e^{-r(T-t)}\Phi\bigg(\frac{W_t - K}{\sqrt{U-t}}\bigg),
\end{align}
where $\Phi$ denotes the cumulative distribution function of a standard normal random variable. For $t\to U$ we recover $P^M(U,U)=\ind{\tau=\infty}$. Note that this is indeed a generalized Merton model in the sense of the definition given on page \pageref{PtTM} and the derivation of representation \eqref{PtTM} with $\mu^M(du):=du+\delta_{U}(du)$ is straightforward. A simple calculation with 
\begin{align*}
P^M(t,T) &= \ind{\tau > t} \exp\bigg( -\int_t^T f(t,u) du - f(t,U)\ind {t < U \le T} \bigg)
\end{align*}
yields $f(t,T)=r$ for $T \not = U$ and 
$$ f(t,U) =- \log \Phi\bigg(\frac{W_t - K}{\sqrt{U-t}}\bigg). $$
By It\^{o}'s formula we obtain 
\begin{align*}
b(t,U)&= - \frac{\varphi\bigg(\frac{W_t - K}{\sqrt{U-t}}\bigg)}{\Phi\bigg(\frac{W_t - K}{\sqrt{U-t}}\bigg)}(U-t)^{-1/2},
\end{align*}
and indeed, $a(t,U)= \frac{1}{2}b^2(t,U).$ Note that  the conditions for Proposition \ref{prop:dcm} hold and, the market consisting of the bonds $P^M(t,T)$ satisfies NAFL, as expected. 

\end{example}
More flexible models of arbitrage-free bond prices can be obtained as in the following section and as in Section \ref{secAffine} on affine generalized Merton models.

\section{Structural Modelling under Incomplete Information}\label{sec:filtering}

Two approaches dominate the credit risk literature: the \emph{structural approach}, where a firm value is modelled together with a precise mechanism leading to default, and the \emph{reduced-form} approach. Reduced-form models proved very successful in calibration and pricing, and  the approach formulated in the first chapters of this work falls into this class. Typically, in reduced-form models  the compensator $H^p$ of the default indicator process $H$ is absolutely continuous (which coined the name \emph{intensity-based}). It was a remarkable insight of \cite{DuffieLando2001}, that introducing incomplete information in a structural model with predictable default time, as in \cite{BlackCox1976,LelandToft96} leads to an intensity-based model and hence connects these two approaches. 

In this section we study both approaches in more detail and show that structural models like the Merton-model under incomplete information on the firm's value do not lead to an intensity-based model: the discontinuities in $H^p$ persist even after projection. 

Besides this, inspired by incomplete information, we  give a constructive example of a default time where the Az\'ema supermartingale has upward and downward jumps in Section \ref{ex:Azema}. This default time complements the approaches in \cite{Song2013,LiRutkowski2012}, see also \cite{JeanblancSong2011a,JeanblancSong2011b,JiaoLi2014}, and sheds some light on the construction of default times beyond the so-called immersion property.  Remarkably, the framework proposed in Section \ref{sec:HJM} is general enough to cover even such cases.

Filtering and statistics always work under the objective probability measure  $P$, while reduced-form approaches always target an ELMM $Q^*\sim P$. In this section we will study structural models under the objective probability measure and motivate the existence of deterministic risky times. Recall that a risky time is a predictable time $S$ such that $P(\tau=S)>0$.  Such risky times persist under an equivalent change of measure. The converse was already observed in \cite[A.1]{ArtznerDelbaen95}, i.e.\ the existence of a default intensity is independent of an equivalent change of measure. Hence, in the context of reduced-form modelling it is necessary to consider a general approach, which is able to incorporate risky times (this is not the case if the default time avoids stopping times). Such a general approach was proposed and studied in Sections \ref{sec:HJM} and \ref{sec:merton}. 

We denote by  $V=(V_t)_{t\geq 0}$ the asset value process of the company under consideration and denote by $D=(D_t)_{t\ge 0}$ the liability process. The company defaults when the value of the firm's assets is not sufficient to meet the liabilities, i.e. 
\begin{align*}
\tau_D:=\inf\{t\ge 0:V_t-D_t < 0\}.
\end{align*}

\begin{example} Most structural approaches  fall in one of the  following classes:
\begin{enumeratei}
	\item In the Merton model, debt of size $K$ has to be repaid at time $U>0$, which in our setup can be covered by letting $D_t = \delta_{U}(t)K$. Extensions to more sophisticated capital structures have been proposed, amongst others, in \cite{Geske1977,GeskeJohnson84}. For example, consider constants $K_1,\dots,K_N$ representing obligatory payments of the company due at times $0<u_1<\dots<u_N$. The default occurs at the first time when a payment cannot be met which leads to $D_t = \sum_{i=1}^N  \delta_{u_i}(t) K_i$. 
	\item Other approaches, like for example \cite{LelandToft96} and \cite{DuffieLando2001} provide theoretical evidence, that it is optimal for the company's owners to liquidate the company if the firm value falls below some liquidation barrier $K'$ such that $D_t=K'$. Continuous, time-dependent barriers can be handled in a similar way, at the expense of tractability. For a detailed description we refer to \cite{SchmidtNovikov:2008}.   
\end{enumeratei}
\end{example}

\subsection{A filtering problem}
Incomplete information in credit risk has been considered in various approaches (see \cite{FreyRunggaldier2010} and \cite{FreySchmidt2011} for a guide to the literature). Here we give a precise formulation when continuous and discrete news are present by means of nonlinear filtering theory. 
We assume that the investors' information contains noisy observations of the firm’s asset value $V$ and some additional economic information $(I_n)_{n\ge 1}$ at random times $(J_n)_{n\ge 1}$. 
Let $\mu_I(dt,dz) = \sum_{n \ge 1} \delta_{(I_n,J_n)}(dt,dz)$ be the random measure associated with the marked point process $(I_n,J_n)_{n\ge 1}$. 

Assume for a moment discrete observations of the form
$$ Y'_n = A(V_{t_n}) + \xi_n $$
with a measurable function $A$, observation times $0<t_1<t_2<\dots$ and (for example, normal and i.i.d.) noise given by $\xi_1,\xi_2,\dots$. The classical filtering problem in discrete time consists of estimating the unobserved process $V$ given the observations. This problem is solved by computing the conditional distribution of $V$. In continuous time, a standard approach is to consider cumulated observations $Y_t = \sum_{n:t_n \le t} Y'_n$ such that the limit under appropriate scaling is given by $\int_0^t A(V_s) ds + W_t$ with a Brownian motion $W$. 

More generally, one models the information market participants have access to by the observation process $(Y_t)_{t\ge0}$,  given as the solution of the stochastic differential equation
\begin{align*}
     dY_t &=  A(t,V_t)dt + B(t,Y_t)dW_t + \int_{\R^+}f(t,z) \mu_I(dt,dz)
\end{align*}
with possibly random initial value $Y_0$.
Under the assumption that the dual predictable projection of the random measure $\mu_{I}(dt,dz)$ satisfies $\nu_I(t,dz)dt = \lambda_tF_I(dz)dt$ this filtering problem has been studied in \cite{CeciColaneri2013}, in the case where $V$ is a diffusion. More general filtering results can be obtained along the lines of \cite{Grigelionis72} and \cite{GrigelionisMikulevicius2011}, for example when the observation times $(J_n)$ are discrete (as in \cite{DuffieLando2001}).

Assume that $\bbF$ is the completed and right-continuous filtration generated by the observation $Y$ and default information, and zero interest rates for simplicity.  Then bond prices given as
$$ P(t,T)=\E^*\big[ \ind{\tau > T} | \cF_t\big] $$
under an equivalent measure $Q^*\sim P$ provide a market which satisfies NAFL.

Finite-dimensional solutions of the filtering problems are rare. One famous example is the so-called Kalman-Bucy filter, see \cite{LiptserShiryaev2001I} for a full account. In the following part, we consider a specification inspired by \cite{DuffieLando2001} in a Gaussian setup leading  to explicit solutions. 

\subsection{Merton model with unknown drift}\label{sec:MertonKalman1}
Assume that the firm's value $V$ is given as geometric Brownian motion with unknown drift, i.e.
$$ dV_t = V_t ( X dt + \sigma dW_t), \quad t \ge 0, $$
with $V_0=v>0$, $\sigma >0$ and a normally distributed random variable $X \sim \cN(\mu_X,\sigma_X)$. We assume that $X$ and $W$ are independent. 
The modelled corporation sold a bond with face value $K>0$ maturing at $T$. While the firm's value is not a traded asset, stocks are traded and, as observed in \cite{Merton1974}, can be viewed as a call option on the firm value. Most importantly, they can be used to estimate $V$ from observed prices.

We therefore assume that $V$ is observable and so is
$$ Y_t = Xt + \sigma W_t, \quad t \ge 0. $$
At the random time $S<T$ news  about a payment of size $K'$ due at time $U>T$ enters the market. This information comes with additional information on the drift $X$. We model this additional information by $Y'=X+\eta$, $\eta$ being a $\cN(0,\sigma_\eta)$-random variable, independent of all other appearing variables. The reason for this additional payment can be manifold, for example issuance of a junior debt as in \cite{GeskeJohnson84} or the purchase of new production capacities.

We follow \cite{GeskeJohnson84} and assume that  default occurs at $T$ or $U$ when the firm value is not sufficient to cover liabilities, such that  $D(t)=\delta_{T}(t)K + \delta_{U}(t)K'$ with positive constants $K,K'$. Default may happen either at $T$ or $U$, i.e.\
\begin{align}\label{eq:tauD}
\tau_D = \begin{cases} T, & V_T < K \\
U, & V_T \ge K \text{ and } V_U < K' \\
\infty & \text{otherwise.}
\end{cases}
\end{align}

The solution of the linear filtering problem before $S$ is well-known (see \cite{LiptserShiryaev2001I} Theorem 8.1). 
The conditional distribution of $X$ given the observation until time $t<S$ is normal and we denote its mean by $\hat X_t$ and (deterministic) variance by $\Sigma(t)$. Then $\Sigma$ solves the Riccati equation $d\Sigma(t)=-\Sigma(t)^2 dt$, such that $\Sigma(t)=\sigma_X(1+\sigma_X t)^{-1}$ with $\sigma_X=\sqrt{\Var(X)}$. Moreover,
\begin{align}\label{Xhat} 
	d\hat X_t = \frac{\Sigma(t)}{\sigma^2}\Big( dY_t - \hat X_t dt\Big),
\end{align}
with initial value $\hat X_0=\E[X].$ A simple calculation shows
$$\hat X_S := \E[X|Y_s:0 \le s \le S, Y'] = \hat X_{S-} + \frac{\Sigma(S)^{3/2}}{\Sigma(S)\cdot (\Sigma(S)+\sigma_\eta^2)} \big( Y' - \hat X_{S-}\big) $$
and $\Sigma(S)= \sigma_\eta^2 \Sigma(S-)\cdot \big( \Sigma(S-)+\sigma_\eta^2)^{-1}$. For $t>S$, one has a classical setting again, such that $\hat X$ satisfies \eqref{Xhat} with initial value $\hat X_S$ and $\Sigma(t) = \Sigma(S)(1+\Sigma(S) (t-S))^{-1}.$

 The available information in the market is denoted, as previously, by $\bbF$. It is the completed filtration generated by $Y$, and $(U,Y')\ind{S\ge \cdot}$. Note that $V$ is $\bbF$-adapted and hence $\tau$ is an $\bbF$-stopping time by \eqref{eq:tauD}. The following proposition gives the conditional default probability with $\Phi$ denoting the cumulative distribution function of the standard normal distribution.
\begin{proposition}\label{prop:merton1}
We obtain that for $0\le t < T$,
\begin{align}
	P(\tau =T | \cF_t) & = 
	\Phi\Big( \frac{  a'(t) - \hat X_t (T-t)}{b'(t)} \Big)
\end{align}
with the limit $P(\tau=T|\cF_T)=\ind{V_T<K}$ and
\begin{align*}
a'(t) &= \log (K \cdot V_t^{-1})+\half\sigma^2(T-t), \\
b'(t) &= \sqrt{T-t} \cdot \sqrt{ \sigma^2+ \Sigma(t) (T-t) }. 
\end{align*}
Moreover, for $S\le t<T$, 
\begin{align*} 
P(\tau =U | \cF_t) & = \Phi\Big( \frac{  a''(t) - \hat X_t (T-t) }{b''(t)} \Big),
\end{align*}
and for $T  \le t < U$ we have that
\begin{align}\label{temp720}
	P(\tau =U | \cF_t) & = \ind{\tau > T }\Phi\Big( \frac{  a''(t) - \hat X_t (U-t) }{b''(t)} \Big),
\end{align}
with limit $P(\tau=U|\cF_U)=\ind{V_T \ge K} \ind{V_U < K'}$ and
\begin{align*}
a''(t) &= \log (K' \cdot V_t^{-1}) +\half\sigma^2(U-t) , \\
b''(t) &= \sqrt{U-t}\cdot \sqrt{ \sigma^2+ \Sigma(t) (U-t)}. 
\end{align*}
\end{proposition}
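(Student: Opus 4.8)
The plan is to reduce every conditional default probability to the conditional law of the observation $Y$, which is Gaussian thanks to the linear filter \eqref{Xhat}. Solving the asset SDE gives $V_s = v\exp((X-\half\sigma^2)s + \sigma W_s) = v\exp(Y_s - \half\sigma^2 s)$, so the barriers become half-lines in $Y$: $\{V_T<K\}=\{Y_T<\log(K/v)+\half\sigma^2 T\}$, $\{V_U<K'\}=\{Y_U<\log(K'/v)+\half\sigma^2 U\}$, and $\{\tau>T\}=\{V_T\ge K\}$. Since $V$ is $\bbF$-adapted, $V_t$ and hence $Y_t$ are $\cF_t$-measurable.

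The one ingredient beyond the recalled filter is that for $s>t$ the increment $W_s-W_t$ is independent of $\sigma(X)\vee\cF_t$: indeed $\cF_t$ is generated by quantities measurable with respect to $\sigma\big(X,\eta,S,U,(W_u)_{u\le t}\big)$, and $W_s-W_t$ is independent of this $\sigma$-algebra. Given $\cF_t$ the filter yields $X\sim\cN(\hat X_t,\Sigma(t))$, so writing $Y_T = Y_t + X(T-t) + \sigma(W_T-W_t)$ as a sum of two conditionally independent Gaussians gives $Y_T\mid\cF_t \sim \cN\big(Y_t+\hat X_t(T-t),\,(T-t)\sigma^2+(T-t)^2\Sigma(t)\big)$. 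Evaluating $P(Y_T<\log(K/v)+\half\sigma^2 T\mid\cF_t)$ and substituting $Y_t=\log(V_t/v)+\half\sigma^2 t$ collapses the numerator to $a'(t)-\hat X_t(T-t)$ and the standard deviation to $b'(t)$, which is the stated formula for $P(\tau=T\mid\cF_t)$. The identical computation with $(U,K')$ in place of $(T,K)$ produces $\Phi\big((a''(t)-\hat X_t(U-t))/b''(t)\big)=P(V_U<K'\mid\cF_t)$; for $T\le t<U$ the factor $\ind{V_T\ge K}=\ind{\tau>T}$ is $\cF_t$-measurable and pulls out of the conditional probability, which is exactly \eqref{temp720}. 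The boundary identities follow by a sign analysis: as $t\uparrow T$ one has $b'(t)\downarrow 0$ while the numerator tends to $\log(K/V_T)$, so the argument of $\Phi$ tends to $+\infty$ or $-\infty$ according to the sign of $\log(K/V_T)$, recovering $\ind{V_T<K}$; the limit at $U$ is analogous.

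The step I expect to be the main obstacle is the regime $S\le t<T$ for $P(\tau=U\mid\cF_t)$, where $V_T$ is not yet $\cF_t$-measurable and $\{\tau=U\}=\{V_T\ge K\}\cap\{V_U<K'\}$ is a genuinely joint event. My approach here would be to condition first on $X=x$: given $\cF_t$ and $X=x$ the pair $\big(\sigma(W_T-W_t),\,\sigma(W_U-W_t)\big)$ is centered Gaussian with correlated coordinates (they share the increment up to $T$), so $P(V_T\ge K,\,V_U<K'\mid\cF_t,X=x)$ is a bivariate-normal probability of a quadrant-type region, which must then be integrated against the $\cN(\hat X_t,\Sigma(t))$-law of $X$. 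This is precisely where the one-dimensional reduction used in the other cases fails; one must either exploit the nested Geske--Johnson barrier structure or pass to a single-barrier simplification, and reconciling this with the stated univariate $\Phi$-expression is the crux.
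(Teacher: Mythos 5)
For the parts you actually carry out, your route coincides with the paper's. The paper conditions on $X$ first, reducing $P(\tau=T\mid\cF_t)$ to $\E[\Phi(\xi)\mid\cF_t]$ with $\xi$ conditionally $\cN\big(a,b^2\big)$, and then applies the Gaussian identity $\E[\Phi(\xi)]=\Phi\big(a/\sqrt{1+b^2}\big)$ (its equation \eqref{eq:Ephi}); you instead observe that, conditionally on $\cF_t$, $Y_T=Y_t+X(T-t)+\sigma(W_T-W_t)$ is a sum of two independent Gaussians and read off its conditional law. These are the same one-dimensional computation written in two orders, and your treatment of the boundary limits (sign analysis as $b'(t)\downarrow 0$) and of the regime $T\le t<U$ (pulling the $\cF_t$-measurable factor $\ind{\tau>T}$ out of the conditional probability) is exactly what the paper intends when it writes ``the second part follows in an analogous way.''

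The obstacle you flag on $S\le t<T$ is genuine, but it is a defect of the proposition as stated, not a missing idea on your side, and you should not try to force your bivariate computation to match the stated univariate formula. What the ``analogous'' univariate computation produces is $P(V_U<K'\mid\cF_t)=\Phi\big((a''(t)-\hat X_t(U-t))/b''(t)\big)$ (the $(T-t)$ in the stated numerator is evidently a typo for $(U-t)$). But by \eqref{eq:tauD}, $\{\tau=U\}=\{V_T\ge K\}\cap\{V_U<K'\}$, so for $t<T$
\[
P(\tau=U\mid\cF_t)=P(V_U<K'\mid\cF_t)-P(V_T<K,\,V_U<K'\mid\cF_t),
\]
and the subtracted term is a.s.\ strictly positive: given $\cF_t$ and $X$, the pair $(Y_T,Y_U)$ is a nondegenerate bivariate Gaussian, so the joint event has positive conditional probability. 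Hence the stated expression strictly exceeds $P(\tau=U\mid\cF_t)$ on $[S,T)$. A second confirmation: $t\mapsto P(\tau=U\mid\cF_t)$ is a martingale that is a.s.\ continuous at $t=T$ (since $V_T=\lim_{t\uparrow T}V_t$ is $\cF_{T-}$-measurable and $S<T$ a.s., so $\cF_{T-}=\cF_T$ modulo null sets), yet the stated pre-$T$ formula does not converge, as $t\uparrow T$, to the post-$T$ formula \eqref{temp720}: the factor $\ind{\tau>T}$ is missing. So your proposed resolution --- condition on $X=x$, compute the correlated bivariate normal probability of $\{V_T\ge K,\,V_U<K'\}$ (the correlation coming from the shared increment on $[t,T]$), then integrate against $\cN(\hat X_t,\Sigma(t))$ --- is the honest computation of $P(\tau=U\mid\cF_t)$ on $[S,T)$; it yields a mixture of bivariate normal probabilities and cannot collapse to a single $\Phi$. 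The paper's proof never confronts this because it treats the second claim as literally analogous to the first; read charitably, the pre-$T$ formula in the proposition computes only the marginal quantity $P(V_U<K'\mid\cF_t)$.
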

\begin{proof}
For $0\le t < T$, we have 
\begin{align*}
	P(\tau =T | \cF_t) & = 
		P\Big( V_t \cdot e^{(X-\frac{\sigma^2}{2})(T-t)+\sigma(W_T-W_t)} \le K | \cF_t\Big) \\
		&= \E\Big[ \Phi\Big(\frac{\log(K/V_t)+(\frac{\sigma^2}{2}-X)(T-t)}{\sigma \sqrt{T-t}} \Big)|\cF_t\Big].
\end{align*}
For a $\cN(a,b^2)-$distributed random variable $\xi$, we obtain
\begin{align}
	\E[\Phi(\xi)] &= \int_{\R}\int_{y \le x} \frac{1}{\sqrt{2\pi}} e^{-\frac{y^2}{2}} dy \frac{1}{\sqrt{2\pi b^2}} e^{-\frac{(x-a)}{2b^2}} dx \nonumber\\
	&= P(\eta \le \xi) = P(\eta-\xi \le 0) = \Phi\Big(\frac{a}{\sqrt{1+b^2}}\Big), \label{eq:Ephi}
\end{align}
with $\eta$ being standard normal and independent of $\xi$. The first part of the proposition follows where the limit for $t\to T$ is easily verified. The second part follows in an analogous way. 
\end{proof}

As shown in Proposition \ref{prop:merton1}, default is predictable in this model such that $H_t=H^p_t$ with $\Delta H^p_t = 0$ for all $t \not \in \{T,U\}$. Moreover, both $U$ and $T$ are announced such that this model satisfies (A2) with $U_1=T$, $U_2=U$,  $\Gamma_1 = \ind{\tau=T}$ and $\Gamma_2 = \ind{\tau=U}$. A more sophisticated term structure would appear if the firm value followed a more complex model, for example with stochastic volatility. A highly tractable class in this regard is the class of affine models proposed in Section \ref{secAffine}.

It is straightforward to extend Proposition \ref{prop:merton1} to the case of more than one observation time $S$ or additional information as in \cite{DuffieLando2001}. The setting considered here is in some sense complementary to \cite{FreyRoeslerLu}, where the authors consider news on the firm value arriving at the times $S_1,S_2,\dots$, while mainly working under the assumption that these times are the jump times from a Poisson process and, whence, not predictable.

\begin{remark}[Bayesian estimation of the market price of risk]
In Merton's model, the Girsanov theorem yields that under any equivalent measure $Q'$ the firm value is again a geometric Brownian motion, but with possibly different drift $\mu'$. In contrast to Merton's original approach we do not assume that $V$ can be considered as traded asset such that typically, $\mu'$ does not equal the risk-free short rate $r$. Equity, considered as a call on the firm value, however gives access to a non-linear function of $\mu'$. Having observations from equity and firm-value at hand and assuming that the estimation of $\mu'$ contains additive noise, this leads to a filtering problem as above. 
The  developed  approach can now be used for a Bayesian estimation of $\mu'$ or, equivalently, the market price of risk. 

\end{remark}

\subsection{On the Az\'ema supermartingale}\label{ex:Azema}
Motivated by the above examples from filtering theory, we study the following, admittedly theoretical, example of a default time $\tau$. However, it clearly shows that our framework is more general than the one studied in \cite{BelangerShreveWong2004}, as in their equation (2.4) the Az\'ema supermartingale only jumps upward because the hazard process is non-decreasing.

 Consider an unobserved normal random variable $X$ taking the values $1$ and $2$, given on some fixed probability space $(\Omega,\cA,Q^*)$ with an equivalent measure $Q^*\sim P$. In addition, the observation $Y$ satisfies $Y_{t_i}=X+\xi_i$, $i\geq 1$, for i.i.d\,random variables $\xi_1,\xi_2,\dots$. In this section we consider the filtration $\bbG\subset \bbF$  as  the (augmented) filtration generated by the observations $Y_{t_1},Y_{t_2},\dots$. The market filtration $\bbF$ additionally contains information about $\tau$, but will not play a r\^ole here. Since $X$ is discrete the conditional distribution of $X$ given $\bbG$ can be computed in a direct manner. Assume that the default time is given by 
$$\tau=\inf\{t\ge 0| Xf(t)\ge E\},$$
where $f$ is a non-decreasing function tending to infinity and $E$ is a standard exponential random variable, independent of $X$. Then the \emph{Az\'{e}ma supermartingale} is  given by
$$Z_t = P(\tau>t|\cG_t) =E^*[\exp(-Xf(t))|\cG_t].$$
Denote by $\pi_t(A)= Q^*(X\in A|\cG_t)$ for $A\in\mathcal{B}(\R)$ the conditional distribution of $X$ given $\cG_t$. Then
$$E^*[\exp(-Xf(t))|\cG_t]=\int_{\R}\exp(-xf(t))\pi_t(dx).$$ 
This process jumps at every news update, i.e.\ at the times $t_1,t_2,\dots$. Obviously, if there is good news, we have an upward jump because the default probability decreases and vice versa. Hence the Az\'ema supermartingale typically has upward and downward jumps. This even holds if $f$ is continuous, and hence $Q^*(\tau=t)=0$ for all $t \ge 0$. 

In the same spirit on can replace the times $t_1,t_2,\dots$ with totally inaccessible times and will obtain again an Az\'ema supermartingale having upward and downward jumps.

\section{Existence of Arbitrage-free Models under Immersion}\label{sec:doublystochastic}

For calibrating the model it is important to identify assumptions, which specify a unique model which can be identified from available data. In this regard, we extend the results of Sections 5 and 6 in \cite{FilipovicSchmidtOverbeck2011} to the case where $H^p$ can have jumps. Then Theorem \ref{prop:dcm} (and, in a similar way, Theorem \ref{thm:dc}) states that under NAFL, the drift parameter $a(t,T)$  is determined by the volatility coefficient $b$. However, the first  drift condition gives an implicit relation of the risk-free short rate and the cumulative compensation for default risk, $H'$  to the process $(f(t,t))_{t \ge 0}$. The purpose of this section is to provide mathematical evidence that such arbitrage-free models exists uniquely under further assumptions. For simplicity, we concentrate on generalized Merton models with deterministic risky dates $u_1,\dots,u_N$.

A single point process is a point process with a single jump, say $\tau'$, and its path can be identified with $(\ind{t \ge \tau'})_{t \ge 0}$.
We therefore assume from now on that the stochastic basis satisfies the following structure. First, let
  \begin{enumerate}
  \item $(\Omega_1,\cG,(\cG_t)_{t \ge 0},Q_1)$ is some filtered probability
  space carrying the market information, in particular the Brownian
  motion $W(\omega)=W(\omega_1)$, the marked point process $(u_n,\Gamma_n)_{n \ge 1}$, the progressive process $h$, together with $\cG=\cG_\infty$.

  \item $(\Omega_2,\cH)$ is the canonical space of paths of single point processes endowed with the minimal filtration $(\cH_t)$:
  the generic $\omega_2\in\Omega_2$ is c\`ad, piecewise
  constant function from $\R_+$ to $\{0,1\}$ starting at $0$ and having at most one jump. Henceforth, we let
  \[H_t(\omega)=\omega_2(t).\] 
  The filtration $\bbH=(\cH_t)_{t \ge 0}$ is thus $\cH_t=\sigma(\tau \wedge s\mid s\le t)$, and
  $\cH=\cH_\infty$,

  \item $Q_2$ is a probability kernel from $(\Omega_1,\cG)$ to
  $\cH$ to be determined below.
  \end{enumerate}
 \begin{description}
\item[(A3)] $\Omega=\Omega_1\times\Omega_2$, $\cA=\cG\otimes\cH$,
$Q^*(d\omega)=Q_1(d\omega_1)Q_2(\omega_1,d\omega_2)$, where
  $\omega=(\omega_1,\omega_2)\in\Omega$, and
$\cF_t=\cG_t\otimes \cH_t$. 
\end{description}

\begin{theorem}\label{thm2gen}
Assume {\bf(A3)} holds. Let $f(0,T)$ and $b(t,T)$ satisfy {\bf(B1$^\prime$)}, and {\bf(B3$^\prime$)}, respectively. Define $a(t,T)$ in such a way that \eqref{dcm2} holds for all $(t,T)$.

Suppose, for any loss path $\omega_2\in\Omega_2$, there exist a family of adapted processes $(f(t,T)_{0 \le t \le T})$, $T \in (0,T^*]$, satisfying  \eqref{f1}, with  $(f(t,t))_{0 \le t \le T^*}$ being progressive and such that \eqref{dcm1} is satisfied. Then
\begin{enumeratei}
  \item\label{thm2gen1} {\bf{(B2$^\prime$)}} is satisfied.

  \item\label{thm2gen2} There exists a unique probability
kernel $Q_2$ from $(\Omega_1,\cG)$ to $\cH$, such that 
$$ H_t^p = \int_0^{t\wedge \tau} h(s) ds + \sum_{i=1}^N \ind{u_i \le t \wedge \tau} \Gamma_i $$
and the no-arbitrage condition \eqref{dcm2} holds.

\item\label{thm2gen4} $H^p$ is the compensator of $H$ with respect to $(\cG\otimes\cH_t)$. Moreover,
\begin{align}
  Q^*\left( \tau> t\mid   \cG \right)=Q_2\left(\tau> t \right)= e^{-\int_{0}^{t}h(s) ds} \cdot \prod_{u_i\le t} (1-\Gamma_i),\quad t\ge 0.\label{thm2gen4eq1}
\end{align}
\item\label{thm2gen3} $Q_2(\cdot, A)$ is $\cG_t$-measurable for all $A\in \cH_t$ and $t\ge 0$.
Consequently, every $\bbG$-martingale is an $\bbF$-martingale.

\end{enumeratei}
\end{theorem}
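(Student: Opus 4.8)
The plan is to construct the kernel $Q_2$ explicitly in the spirit of \cite{Jacod1975}, starting from the conditional survival function dictated by \eqref{thm2gen4eq1}, and then to read off the four assertions. Motivated by \eqref{thm2gen4eq1}, I set
$$ G_t(\omega_1) := e^{-\int_0^t h(s)\,ds}\,\prod_{u_i \le t}\big(1-\Gamma_i\big), \qquad t \ge 0. $$
By {\bf (A1)} the process $h$ is non-negative and integrable, so the continuous factor is non-increasing; by {\bf (A2$^\prime$)} each $\Gamma_i\in(0,1)$, so every jump factor lies in $(0,1)$. Hence $G$ is a $[0,1]$-valued, non-increasing, c\`adl\`ag process with $G_0=1$, and since $h$ is $\bbG$-progressive while each $\Gamma_i$ is $\cG_{u_i-}$-measurable, $G_t$ is $\cG_t$-measurable. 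For each fixed $\omega_1$ I would let $Q_2(\omega_1,\cdot)$ be the unique law on the canonical single-jump space $(\Omega_2,\cH)$ under which the jump time $\tau$ has survival function $t\mapsto G_t(\omega_1)$ (the path $\tau=\infty$ carrying the residual mass $G_\infty$); this is well defined because the law of a single point process is determined by that of its unique jump.

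Next I would establish the kernel property and the first statement of part (iv). As $Q_2(\omega_1,\{\tau>s\})=G_s(\omega_1)$ is $\cG_s$-measurable, the map $\omega_1\mapsto Q_2(\omega_1,A)$ is $\cG_t$-measurable for every generator $A=\{\tau>s\}$ with $s\le t$; a monotone-class argument over the $\pi$-system generating $\cH_t=\sigma(\tau\wedge s:s\le t)$ extends this to all $A\in\cH_t$, so that $Q_2$ is a genuine probability kernel and $Q_2(\cdot,A)$ is $\cG_t$-measurable. For part (iii) I would condition on $\cG$, i.e.\ fix $\omega_1$: then $\tau$ is a single jump time with the \emph{deterministic} c\`adl\`ag survival function $G(\omega_1)$, and its $(\cG\otimes\cH_t)$-compensator is the classical cumulative hazard $-\int_{(0,\,t\wedge\tau]}G_{s-}^{-1}\,dG_s$. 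The continuous part contributes $\int_0^{t\wedge\tau}h(s)\,ds$, while at each atom $u_i\le t\wedge\tau$ one has $\Delta G_{u_i}=-\Gamma_i G_{u_i-}$, so the compensator jumps by $-\Delta G_{u_i}/G_{u_i-}=\Gamma_i$; summing yields exactly $H^p_t=\int_0^{t\wedge\tau}h(s)\,ds+\sum_i\ind{u_i\le t\wedge\tau}\Gamma_i$, and \eqref{thm2gen4eq1} holds by construction. I expect the careful treatment of the atoms of $G$ — checking that they contribute $\Gamma_i$ rather than $-\log(1-\Gamma_i)$, and that $\{\tau=u_i\}$ is handled correctly — to be the main point, since this atomic feature is precisely what is absent from intensity-based constructions. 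Uniqueness in part (ii) follows because the prescribed compensator $H^p$ determines $G$ as its stochastic (Dol\'eans) exponential, hence fixes $Q_2$.

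For the immersion statement in part (iv), let $M$ be a bounded $\bbG$-martingale under $Q^*$ (equivalently under $Q_1$, since $Q^*|_\cG=Q_1$). For $s<t$, bounded $\cG_s$-measurable $\Phi$ and $A\in\cH_s$, the kernel factorisation gives
$$ \E^*\big[M_t\,\Phi\,\ind{A}\big]=\E_{Q_1}\big[M_t\,\Phi\,Q_2(\cdot,A)\big]. $$
Since $A\in\cH_s$, the factor $Q_2(\cdot,A)$ is $\cG_s$-measurable by the first statement of (iv), so conditioning on $\cG_s$ and using the $\bbG$-martingale property replaces $M_t$ by $M_s$, giving $\E^*[M_t\,\Phi\,\ind{A}]=\E^*[M_s\,\Phi\,\ind{A}]$. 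As such products generate $\cF_s=\cG_s\otimes\cH_s$, this means $\E^*[M_t\mid\cF_s]=M_s$, so $M$ is an $\bbF$-martingale. In particular $H-H^p$, which is $\bbF$-adapted, is a $(\cG\otimes\cH_t)$-martingale by part (iii); since $\bbF\subset(\cG\otimes\cH_t)$ it is then an $\bbF$-martingale, and as $H^p$ is $\bbF$-predictable it is the $\bbF$-compensator of $H$, which is what is needed to invoke Corollary \ref{prop:dcm}.

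Finally, for part (i) I would recover $a$ from \eqref{dcm2}: writing $\bar b(t,T)=\int_t^T b(t,u)\,\mu^M(du)$ and differentiating $\half\parallel\bar b(t,T)\parallel^2$ in $T$ — separately along the Lebesgue part and at the atoms $u_1,\dots,u_N$ of $\mu^M$ — expresses $a(t,\cdot)$ through $b$. The resulting $a$ inherits $\cO\otimes\cB$-measurability from $b$, and boundedness of $b$ on $[0,T^*]$ from {\bf (B3$^\prime$)} yields $\int_0^{T^*}\int_0^{T^*}|a(s,t)|\,ds\,dt<\infty$, so that {\bf (B2$^\prime$)} holds; this last step is routine bookkeeping. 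The condition \eqref{dcm2} itself holds by the very choice of $a$, completing part (ii).
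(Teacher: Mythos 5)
Your proof is correct, and it takes a genuinely more self-contained route than the paper's. The paper disposes of (i), (ii) and the measurability/immersion statement (iv) by citing Theorem 5.1 of \cite{FilipovicSchmidtOverbeck2011}, and only writes out an argument for (iii): it projects the martingale $H-H^p$ onto $\cG$ to show that $\Psi(t):=Q^*(\tau>t\mid\cG)$ solves the integral equation $\Psi(t)=1-\int_0^t\Psi(s)h(s)\,ds-\sum_{u_i\le t}\Psi(u_i-)\Gamma_i$, whose unique solution is the product formula \eqref{thm2gen4eq1}. You argue in the converse direction: you construct the kernel $Q_2$ from the product formula, verify via the single-jump hazard $-\int_{(0,t\wedge\tau]}G_{s-}^{-1}\,dG_s$ that the resulting compensator is exactly $H^p$ (with the atoms correctly contributing $\Gamma_i=-\Delta G_{u_i}/G_{u_i-}$ rather than $-\log(1-\Gamma_i)$), obtain uniqueness from the Dol\'eans-exponential representation of the conditional survival function, and then deduce immersion from the $\cG_t$-measurability of $Q_2(\cdot,A)$ for $A\in\cH_t$. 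This constructive direction is what existence in (ii) actually requires, and it makes explicit the adaptation to predictable atoms which the cited theorem, stated in an absolutely continuous (intensity-type) setting, does not literally cover; note that your Dol\'eans uniqueness argument and the paper's integral equation for $\Psi$ are two phrasings of the same computation, so your construction subsumes the paper's proof of (iii). What the paper's route buys is brevity, getting \eqref{thm2gen4eq1} and uniqueness in one stroke; what yours buys is a complete proof of existence, of the compensator identity in both filtrations (your passage from the $(\cG\otimes\cH_t)$-compensator to the $\bbF$-compensator via adaptedness and predictability of $H^p$ is a point the paper leaves implicit), and of {\bf(B2$^\prime$)} by differentiating \eqref{dcm2} separately along the Lebesgue part and at the atoms of $\mu^M$.
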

\begin{proof}
(i), (ii) and (iv) follow as in Theorem 5.1 in \cite{FilipovicSchmidtOverbeck2011}. Regarding (iii), note that
\begin{align*}
  \Psi(\omega_1,t)&:=Q^*(\tau>t\mid\cG)(\omega_1)
  = \E^*[1-H_t|\cG_\infty](\omega_1) \\
	&= 1- \int_0^t \Psi(\omega_1,s) h(\omega_1,s) ds - \sum_{u_i \le t} \Psi(\omega_1,u_i-) \Gamma_i(\omega_1) \\
	&= Q^*(\tau>t\mid\cG_t)(\omega_1),
\end{align*}
which implies \eqref{thm2gen4eq1}.
\end{proof}
Property (iv) of the theorem is known as ``(H)-hypothesis'', see \cite{BremaudYor1978,SchoenbucherEhlers,JeanblancRutkowski2000}. Note that in Example \ref{ex:Azema} the (H)-hypothesis does not hold. 
Formula \eqref{thm2gen4eq1} can be used for Monte-Carlo simulation, see Section 5.1 of \cite{FilipovicSchmidtOverbeck2011}.

\section{Affine generalized Merton models}\label{secAffine}
Affine processes are a well-known tool in the financial literature and one reason for this is their analytical tractability. They have been applied to a wide range of financial problems, in particular in term-structure modelling. The aim of this section is to give an affine specification of generalized Merton models. Affine processes  in the literature are assumed to be \emph{stochastically continuous} (see \cite{DuffieFilipovicSchachermayer} and \cite{Filipovic05}).  Due to the discontinuities in the term structure we propose to consider \emph{piecewise continuous affine processes}. To the best of our knowledge, this is the first time that such (time-inhomogeneous) affine processes are investigated.  

We assume that $ \cU=\{u_1,\dots,u_N\} $ and a vanishing short rate $r_t=0$ in this section and place ourselves in the context of Section \ref{sec:doublystochastic}. The idea is to consider an affine process $X$ and study arbitrage-free doubly stochastic term structure models where the compensator $H^p$ of the default indicator process $H=\ind{\cdot \le \tau}$ is given by
\begin{align} \label{affineHp} 
H_t^p = \int_0^t \Big( \phi_0(s)+\psi_0(s)^\top \cdot X_s \Big) ds + \sum_{i=1}^n\ind{t \ge u_i} \Big(1-e^{- \phi_i - \psi_i^\top \cdot X_{u_i}}\Big).
\end{align}
To ensure that $H^p$ is non-decreasing we will require that $\phi_0(s)+\psi_0(s)^\top \cdot X_s  \ge 0$ for all $s \ge 0$ and 
$\phi_i + \psi_i^\top \cdot X_{u_i} \ge 0$ for all $i=1,\dots,N$. 

We place ourselves in the doubly stochastic settting of Section \ref{sec:doublystochastic} and assume additionally that the probability space $(\Omega_1,\cG,(\cG_t)_{t \ge 0},Q_1)$ carries the $d$-dimensional Brownian motion $W$ and that $(\cG_t)_{t \ge 0}$ is generated by $W$ with the usual augmentation by null sets. There is no need of an additional point process here, as will become clear shortly.

In this regard, consider a state space in canonical form $\cX=\R_{\ge 0}^m \times \R^n$ for integers $m,n \ge 0$ with $m+n=d$ and a $d$-dimensional Brownian motion $W$. Let $\mu$ and $\sigma$ be defined on $\cX$ by
\begin{align}
\label{affine_mu}
\mu(x) &= \mu_0 + \sum_{i=1}^dx_i\mu_i,\\
\label{affine_sigma}
\half \sigma(x)^\top\sigma(x) &= \sigma_0 + \sum_{i=1}^dx_i\sigma_i,
\end{align}
where $\mu_0,\mu_i\in\R^d$, $\sigma_0,\sigma_i\in\R^{d\times d}$, for all $i\in\{1,\ldots,d\}$. We assume that the parameters $\mu^i,\ \sigma^i$, $i=0,\dots,d$ are admissible in the sense of Theorem 10.2 in \cite{Filipovic2009}. Then the continuous, unique strong solution  of the stochastic differential equation
\begin{align}
dX_t &= \mu(X_t)dt + \sigma(X_t)dW_t,\quad X_0=x,
\end{align}
is an \emph{affine} process $X$ on the state space $\cX$, see Chapter 10 in \cite{Filipovic2009} for a detailed exposition.

We call a bond-price model \emph{affine} if there exist functions $A:\R_{\ge 0}\times \R_{\ge 0} \to \R$, $B:\R_{\ge 0}\times \R_{\ge 0} \to \R^d$ such that
\begin{align}\label{affineMertonmodel}
P^M(t,T)= \ind{\tau > t} e^{-A(t,T)-B(t,T)^\top \cdot X_t},
\end{align}
for $0 \le t \le T \le T^*$. In view of Assumption \ref{ass1} we assume that $A(.,T)$ and $B(.,T)$ are right-continuous. Moreover, we assume that  $t \mapsto A(t,.)$ and $t \mapsto B(t,.)$ are differentiable from the right and denote by $\partial_t^+$ the right derivative.  
The following proposition gives sufficient conditions such that an affine generalized Merton model is arbitrage-free. 

\begin{proposition}\label{prop:affine}
Assume that $\phi_0:\R_{\ge 0}\to \R$, $\psi_0:\R_{\ge 0}\to \R^d$ are continuous,  $\psi_0(s)+\psi_0(s)^\top \cdot x \ge 0$ for all $s \ge 0$ and $x \in \cX$ and the constants $\phi_1,\dots,\phi_n \in \R$ and $\psi_1,\dots,\psi_n \in \R^d$ satisfy $\phi_i + \psi_i^\top \cdot x \ge 0$ for all $1 \le i \le n$ and $x \in \cX$.
Moreover, let the functions $A:\R_{\ge 0}\times \R_{\ge 0} \to \R$ and $B:\R_{\ge 0}\times \R_{\ge 0}\to \R^d$ be the unique solutions of 
\begin{align} \label{ric1}
\begin{aligned}A(T,T) &= 0 \\
A(u_i,T) &= A(u_i-,T)-\phi_i \\
- \partial_t^+ A(t,T) &= \phi_0(t) +  \mu_0^\top \cdot B(t,T) 
- B(t,T)^\top \cdot \sigma_0 \cdot B(t,T),  \\
\end{aligned} \intertext{and} \label{ric2}
\begin{aligned}
B(T,T) &= 0 \\
B_k(u_i,T) &= B_k(u_i-,T) - \psi_{i,k} \\ 
-\partial_t^+ B_k(t,T) &= \psi_{0,k}(t) +  \mu_k^\top \cdot B(t,T)  - B(t,T) ^\top \cdot \sigma_k \cdot B(t,T),
\end{aligned}
\end{align}
for $0 \le t \le T$.
Then, the doubly-stochastic affine model given by \eqref{affineHp} and \eqref{affineMertonmodel} satisfies NAFL. 
\end{proposition}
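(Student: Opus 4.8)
The plan is to exploit that in this section the short rate vanishes, so $X^0\equiv 1$ and Theorem~\ref{th1} reduces NAFL to the single statement that each bond $(P^M(t,T))_{0\le t\le T}$ is a local $Q^*$-martingale. I work throughout in the doubly-stochastic setting of Section~\ref{sec:doublystochastic}. The first step is to invoke Theorem~\ref{thm2gen}: with the $\bbG$-adapted non-negative intensity $h_s=\phi_0(s)+\psi_0(s)^\top X_s$ and the jump sizes $\Gamma_i=1-e^{-\phi_i-\psi_i^\top X_{u_i}}\in[0,1)$ (non-negativity of the two exponents is precisely the standing hypothesis), there is a unique kernel $Q_2$ under which \eqref{affineHp} is the $\bbF$-compensator $H^p$ of $H$, and the (H)-hypothesis of Theorem~\ref{thm2gen}\eqref{thm2gen3} holds, so that $W$ remains a Brownian motion and $X$ keeps its dynamics in the market filtration. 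The functions $A,B$ exist and are unique by hypothesis, being the solutions of the piecewise-smooth Riccati systems \eqref{ric1}--\eqref{ric2} with the prescribed jumps at the $u_i$.

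Next I would factor $P^M(t,T)=(1-H_t)\,Y_t$ with $Y_t:=\exp\big(-A(t,T)-B(t,T)^\top X_t\big)$ and compute $dY_t$ by It\^o's formula, inserting $dX_t=\mu(X_t)dt+\sigma(X_t)dW_t$ and the affine forms \eqref{affine_mu}--\eqref{affine_sigma}. Between two consecutive risky dates the drift of $dY_t/Y_t$ is
\begin{align*}
-\partial_t^+A(t,T)-\partial_t^+B(t,T)^\top X_t-B(t,T)^\top\mu(X_t)+B(t,T)^\top\Big(\sigma_0+\textstyle\sum_k X_{t,k}\sigma_k\Big)B(t,T);
\end{align*}
substituting the ODE parts of \eqref{ric1}--\eqref{ric2} and matching the constant and the $X_{t,k}$-linear coefficients separately makes the $\mu$- and $\sigma$-terms cancel, leaving exactly $\phi_0(t)+\psi_0(t)^\top X_t=h_t$. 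At each $u_i\le T$ the jump conditions $A(u_i,T)=A(u_i-,T)-\phi_i$ and $B(u_i,T)=B(u_i-,T)-\psi_i$ produce a jump $\Delta Y_{u_i}=Y_{u_i-}\big(e^{\phi_i+\psi_i^\top X_{u_i}}-1\big)$ whose size is $\cF_{u_i-}$-measurable because $X$ is continuous; this is a predictable, fixed-time jump of $Y$.

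I would then apply integration by parts, $dP^M(t,T)=(1-H_{t-})dY_t-Y_{t-}dH_t-d[H,Y]_t$, write $dH_t=dH^p_t+dM_t$, and isolate the predictable finite-variation part on $\{t<\tau\}$. The absolutely continuous drifts cancel immediately ($Y_th_t-Y_th_t=0$). At a risky time $u_i$ the predictable contributions are the jump $Y_{u_i-}(e^{\phi_i+\psi_i^\top X_{u_i}}-1)$ from $(1-H_{t-})dY_t$, the compensator jump $-Y_{u_i-}\Gamma_i$ from $-Y_{t-}dH^p_t$, and, since the dual predictable projection of $\ind{\tau=u_i}$ equals $\Gamma_i$, the compensated covariation $-\Gamma_i\,Y_{u_i-}(e^{\phi_i+\psi_i^\top X_{u_i}}-1)$ from $-d[H,Y]_t$. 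Factoring out $Y_{u_i-}$ and using $1-\Gamma_i=e^{-\phi_i-\psi_i^\top X_{u_i}}$, the bracket collapses, $(1-\Gamma_i)(e^{\phi_i+\psi_i^\top X_{u_i}}-1)-\Gamma_i=0$. Hence $P^M(\cdot,T)$ carries no drift up to $\tau$ and is identically zero afterwards, so it is a local $Q^*$-martingale, and Theorem~\ref{th1} yields NAFL.

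The main obstacle I anticipate is the bookkeeping at the predictable risky times $u_i$: one must keep apart the deterministic-time, predictable jump of $Y$ from the genuine default jump of $H$, and correctly compute the dual predictable projection of $[H,Y]$ there. The cancellation works only because the jump conditions of the Riccati system encode the calibration $\Gamma_i=1-e^{-\phi_i-\psi_i^\top X_{u_i}}$, which is the affine incarnation of the atom term in condition \eqref{dc1}. In principle the statement is a special case of Corollary~\ref{prop:dcm} through the reading $f(t,T)=\partial_TA(t,T)+\partial_TB(t,T)^\top X_t$, but since the Riccati equations are posed in the $t$-variable the direct local-martingale computation above is more transparent, and I would present it that way.
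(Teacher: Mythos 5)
Your proposal is correct: the drift computation between risky dates (the $\mu$- and $\sigma$-terms cancelling via \eqref{ric1}--\eqref{ric2} to leave exactly $h_t=\phi_0(t)+\psi_0(t)^\top X_t$) and the cancellation at each $u_i$ via $(1-\Gamma_i)\bigl(e^{\phi_i+\psi_i^\top X_{u_i}}-1\bigr)-\Gamma_i=0$ are both sound, and the appeal to Theorem~\ref{thm2gen} for the existence of the kernel and immersion is an appropriate (indeed slightly more careful) way to anchor the doubly-stochastic setting. However, your route is genuinely different from the paper's. The paper does precisely what you elect to avoid in your closing remark: it reads the affine model as a generalized Merton model by decomposing $A(t,T)=\int_t^T a'(t,u)\,du+\sum_{u_i\in(t,T]}\phi_i$ and $B(t,T)=\int_t^T b'(t,u)\,du+\sum_{u_i\in(t,T]}\psi_i$, identifies the forward rates $f(t,u_i)=\phi_i+\psi_i^\top X_t$ and $f(t,T)=a'(t,T)+b'(t,T)^\top X_t$ together with their It\^o coefficients $a,b$, and then verifies the abstract drift conditions \eqref{dcm1}--\eqref{dcm2} of Corollary~\ref{prop:dcm}, from which NAFL follows. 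Your proof instead inlines, in the affine special case, the integration-by-parts argument that the paper only performs once, in the proof of Theorem~\ref{thm:dc}: you factor $P^M(\cdot,T)=(1-H)Y$, compute the semimartingale decomposition directly, and invoke Theorem~\ref{th1}. What the paper's route buys is economy --- only coefficient identification is needed, with all stochastic calculus delegated to the already-proven general theorem; what your route buys is transparency --- the calibration mechanism $\Gamma_i=1-e^{-\phi_i-\psi_i^\top X_{u_i}}$ matching the predictable jump of $Y$ against the compensator jump of $H$ and the covariation term is made explicit rather than hidden inside the general drift condition. Both are complete proofs; yours simply re-proves a special case of machinery the paper reuses.
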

\begin{proof}
By construction,
\begin{align*}
A(t,T)  = \int_t^T a'(t,u) du+\sum_{i:u_i \in (t,T]} \phi_i \\
B(t,T)  = \int_t^T b'(t,u) du+\sum_{i:u_i \in (t,T]} \psi_i
\end{align*}
with suitable functions $a'$ and $b'$ and $a'(t,t)=\phi_0(t)$ as well as $b'(t,t)=\psi_0(t)$.
Comparison of \eqref{affineMertonmodel} with \eqref{PtTM} yields the following: on the one hand, for $T =u_i\in \cU$, we obtain $f(t,u_i)=\phi_i+\psi_i^\top \cdot X_t$.  Hence, the coefficients $a(t,T)$ and $b(t,T)$ in \eqref{f1} for $T=u_i \in \cU$ compute to $a(t,u_i)=\psi_i^\top \cdot \mu(X_t)$ and $b(t,u_i) = \psi_i^\top \cdot \sigma(X_t)$.

On the other hand, for $T \not \in \cU$ we obtain that $f(t,T) = a'(t,T)+b'(t,T)^\top \cdot X_t$. Then, the coefficients $a(t,T)$ and $b(t,T)$  can be computed by It\^o's formula, i.e.
\begin{align} \label{abaraffine}
\begin{aligned}
	a(t,T) &= \partial_t a'(t,T) + \partial_t b'(t,T)^\top \cdot X_t + b'(t,T)^\top \cdot \mu(X_t) \\
	b(t,T) &= b'(t,T)^\top \cdot \sigma(X_t).
\end{aligned}
\end{align}
Set $\bar a'(t,T)=\int_t^T a'(t,u) du$ and $\bar b'(t,T)=\int_t^T b'(t,u) du$ and note that, 

$$ \int_t^T \partial_t a'(t,u) du = \partial_t \bar a'(t,T) + a'(t,t). $$
As $\partial_t^+ A(t,T)=\partial_t \bar a'(t,T)$, and $\partial_t^+ B(t,T)=\partial_t \bar b'(t,T)$, we obtain from \eqref{abaraffine} that 
\begin{align*}
  \bar a(t,T) & = \int_t^T a (t,u) \mu^M (du) = \int_t^T a(t,u) du  + \sum_{u_i \in (t,T]} \psi_i^\top \cdot \mu(X_t) \\
  &= \partial_t^+ A(t,T) + a'(t,t) 
  + \big( \partial_t^+ B(t,T) + b'(t,t) \big)^\top \cdot X_t + B(t,T)^\top  \cdot \mu(X_t), \\ 
   \bar b(t,T) & =\int_t^T b(t,u) \mu^M(du) = \int_t^T b(t,u) du +  \sum_{u_i \in (t,T]} \psi_i^\top \cdot \sigma(X_t) \\
   &=B(t,T)^\top \cdot \sigma(X_t)
\end{align*}
for $0 \le t \le T \le T^*$.
We now show that under our assumptions, the drift conditions \eqref{dcm1}-\eqref{dcm2} hold: first, Assumption \ref{ass1} is satisfied.
Observe that, by equations \eqref{ric1}, \eqref{ric2}, and the affine specification \eqref{affine_mu}, and \eqref{affine_sigma}, 
the drift condition  \eqref{dcm2} holds. Moreover,
$$ \Delta H'(u_i) =  \phi_i + \psi_i^\top \cdot X_{u_i} $$
and $h(s)= \phi_0(s) + \psi_0(s)^\top \cdot X_s$ by \eqref{affineHp}. We recover $\Delta H_{u_i}^p = \Gamma_i=1-\exp(-\phi_i - \psi_i^\top \cdot X_{u_i})$ takes values in $[0,1)$ by assumption. Hence, 
 \eqref{dcm1} holds and the claim follows. 
\end{proof}

\begin{example}
In the one-dimensional case we consider  $X$,  given as solution of 
\begin{align*}
dX_t &= (\mu_0+\mu_1 X_t)dt + \sigma\sqrt{X_t}dW_t, \qquad t \ge 0.
\end{align*}
We assume for simplicity that $u_1=1$ and $N=1$ such that there is a single risky time, 1, and choose  $\mu^M(du)=\delta_{1}(du)$.  Moreover, let $\phi_0 = 0$, $\psi_0=1$ as well as $\phi_1=0$ and $\psi_1\ge 0$, such that
$$ H^p = \int_0^t X_s ds + \ind{t \ge 1} (1-e^{-\psi_1 X_1}). $$
Hence the probability of having no default at time $1$ just prior to $1$ is given by $e^{-\psi_1 X_1}$, compare Example \ref{ex:doublystochastic}.

An arbitrage-free model can be obtained by choosing $A$ and $B$ according to Proposition \ref{prop:affine} which can be immediately achieved using Lemma 10.12 from \cite{Filipovic2009} (see in particular Section 10.3.2.2 on the CIR short-rate model): denote $\theta=\sqrt{\mu_1^2+2\sigma^2}$ and
\begin{align*}
L_1(t) &= 2(e^{\theta t}-1) ,\\
L_2(t) &= \theta(e^{\theta t}+1) + \mu_1 (e^{\theta t}-1) ,\\
L_3(t) &= \theta(e^{\theta t}+1) - \mu_1 (e^{\theta t}-1) ,\\
L_4(t) &= \sigma^2(e^{\theta t}-1).
\end{align*}
Then 
\begin{align*} 
	A_0(s) &= \frac{2 \mu_0}{\sigma^2} \log \Big( \frac{2 \theta e^{\frac{(\sigma - \mu_1)t}{2}} } {L_3(t)}\Big), \quad B_0(s) = -\frac{L_1(t)}{L_3(t)} 
\end{align*}
are the unique solutions of the Riccati equations $B_0'=\sigma^2 B_0^2-\mu_1 B_0 $ with boundary condition $B_0(0)=0$ and $A_0'=-\mu_0 B_0$ with boundary condition $A_0(0)=0$. Note that with $A(t,T) = A_0(T-t)$ and $B(t,T)= B_0(T-t)$ for $0 \le t \le T < 1$, the conditions of Proposition \ref{prop:affine} hold. Similarly,
for $1 \le t \le T$, choosing
$A(t,T)=A_0(T-t)$ and $B(t,T)=B_0(T-t)$ implies again the validity of \eqref{ric1} and \eqref{ric2}. On the other hand, for $0 \le t <1$ and $T \ge 1$ we set $u(T)=B(1,T)+\psi_1=B_0(T-1)+\psi_1$, according to \eqref{ric2}, and let 
\begin{align*}
	A(t,T) &= \frac{2 \mu_0}{\sigma^2} \log \Big( \frac{2 \theta e^{\frac{(\sigma - \mu_1)(1-t)}{2}} } {L_3(1-t)-L_4(1-t)u(T)}\Big)\\
	B(t,T) &= -\frac{L_1(1-t)-L_2(1-t)u(T)}{L_3(1-t)-L_4(1-t)u(T)}.
\end{align*}
It is easy to see that \eqref{ric1} and \eqref{ric2} are also satisfied in this case, in particular  $\Delta A(1,T)=-\phi_1=0$ and $\Delta B(1,T)=-\psi_1$. Note that, while $X$ is continuous, the bond prices are not even stochastically continuous because they jump almost surely at $u_1=1$.  We conclude by Proposition \ref{prop:affine} that this affine model is arbitrage-free. \hfill $\diamond$
\end{example}

\section{Conclusion}\label{sec:conclusion} 

In this article we studied a new class of dynamic term structure models with credit risk where the compensator of the default time may jump at predictable times. This extends existing theory and allows including a number of structural approaches, like Merton's model, in a reduced-form model for pricing credit derivatives. More tractability can be obtained if the (default) risky times, i.e.\ predictable times where default can happen with positive probability, are even deterministic, which we studied in a class of generalized Merton models. Finally,  we provided a new class of highly tractable affine models which are  only piecewise stochastically continuous.

\end{document}